\documentclass[12pt,twoside]{amsart}
\usepackage{amsmath}
\usepackage{amssymb}
\usepackage{mathrsfs}         
\usepackage{yhmath}			  
\usepackage{subcaption}
\captionsetup[subfigure]{labelfont=rm}


\usepackage{pgfplots}
\pgfplotsset{compat=newest}
\usepgfplotslibrary{external}

\usepackage{tikz,tkz-euclide}
\usetikzlibrary{arrows,decorations.pathmorphing,decorations.markings,backgrounds,positioning,fit,petri,patterns,calc,intersections}
\usetkzobj{all} 
\usetikzlibrary{3d}
\newcommand\pointlat[1]{ ({1.5*cos(#1)},{1.5*sin(#1)}) }
\newcommand\pointlatl[1]{ ({-1.5*cos(#1)},{1.5*sin(#1)}) }
\newcommand\pointlon[1]{ ({1.5*sin(#1)},{1.5*cos(#1)}) }
\newcommand\pointlond[1]{ ({1.5*sin(#1)},{-1.5*cos(#1)}) }
\newcommand\latitude[1]{%
  \draw (#1:1.5) arc (0:-180:{1.5*cos(#1)} and {0.2*cos(#1)});
  \draw[dashed] (#1:1.5) arc (0:180:{1.5*cos(#1)} and {0.2*cos(#1)});
}
\newcommand\longitude[1]{%
  \draw[] ({1.5*sin(#1)},{1.5*cos(#1)}) arc (90:270:{0.2*cos(#1)} and {1.5*cos(#1)});
  \draw[dashed] ({1.5*sin(#1)},{1.5*cos(#1)})  arc (90:270:{-0.2*cos(#1)} and {1.5*cos(#1)});
}

\newcommand\axes{
\draw[->] (1.5,0) -- (2,0) node[anchor=north west] {$ \gamma_1$};
\draw[->] (-0.2,-0.2) -- (-0.5,-0.5) node[anchor=north  east] {$ \gamma_3$};
\draw[->] (0,1.5) -- (0,2) node[anchor=north west] {$ \gamma_2$};
}
\newcommand\circlez{(0,0) circle (1.5cm)}
\newcommand\regionh[1]{ \pointlat{#1} arc (0:-180:{1.5*cos(#1)} and {0.2*cos(#1)})-- \pointlatl{#1} arc ({#1}:-{#1}:-1.5cm and 1.5cm)--\pointlatl{-#1} arc (0:-180:{-1.5*cos(#1)} and {0.2*cos(#1)})-- \pointlat{-#1} arc (-{#1}:{#1}:1.5cm and 1.5cm)}
\newcommand\regionv[1]{  \pointlon{-#1} arc (90:270:{0.2*cos(#1)} and {1.5*cos(#1)})-- \pointlond{-#1} arc ({-90-#1}:{-90+#1}:1.5 and 1.5)--\pointlond{#1} arc (-90:-270:{0.2*cos(#1)} and {1.5*cos(#1)})--\pointlon{#1} arc ({90-#1}:{90+#1}:1.5 and 1.5)}

\usepackage{xcolor}

\usepackage[colorlinks=true, urlcolor=blue,bookmarks=true,bookmarksopen=true, citecolor=blue]{hyperref}

\addtolength{\textwidth}{2.9cm}
\addtolength{\hoffset}{-1.3cm}
\addtolength{\textheight}{1.4cm}
\addtolength{\topmargin}{-0.7cm}

\synctex=1
\numberwithin{equation}{section}

\newtheorem{theorem}{Theorem}[section]

\newtheorem{corollary}[theorem]{Corollary}

\newtheorem{lemma}[theorem]{Lemma}
\newtheorem{prop}[theorem]{Proposition}
\newtheorem{proposition}[theorem]{Proposition}



\def \begineq{\begin{equation}}
\def \endeq{\end{equation}}

\def \bb{\mathbb}

\def \mc{\mathcal}

\def \QQ{{\bb{Q}}}
\def \RR{{\bb{R}}}

\def \ZZ{{\bb{Z}}}


\def \({\left(}
\def \){\right)}
\def \<{\left\langle}
\def \>{\right\rangle}

\def \bar{\overline}

\def \inter{\cap}

\def \union{\cup}
\def \vargeq{\geqslant}
\def \varleq{\leqslant}

\def \xto{\xrightarrow}

\def \const{{const}}


\usepackage{graphicx}

\begin{document}
 \title{Suslov problem with the Klebsh-Tisserand potential}
 \author{Shengda Hu, Manuele Santoprete}
 \abstract
 In this paper, we study a nonholonomic mechanical system, namely the Suslov problem with the Klebsh-Tisserand potential.  We analyze the topology of the level sets defined by the integrals in two ways: using an explicit construction and as a consequence of the Poincar\'e-Hopf theorem. We   describe the flow on such manifolds.
 \endabstract
 \maketitle
 
 \section{Introduction}\label{sect:preintro}
 A Hamiltonian  system on a $2n$-dimensional symplectic manifold is an called completely  integrable if it admits $n$ independent integrals of motion in involution. For such systems, if the common level sets  $ S _k $  of the integrals are compact, by the Liouville-Arnold theorem, the $ S _k $  are  invariant tori of dimension $n$, and the flow on the tori is isomorphic to a linear flow. The system is super-integrable if there are more than $n$ independent integrals of motions, and the invariant tori are of dimension less than $n$.
 
 In the present paper, we are concerned with a family of dynamical systems, the so called Suslov's problem, that are not Hamiltonian, but exhibit important features of integrable and super-integrable Hamiltonian system. As first formulated in \cite{suslov1946theoretical}, it describes the  dynamics of a rigid body with a fixed point immersed in a potential field and subject to a nonholonomic constraint that forces the angular velocity component along a given direction in the body to vanish.
 Our analysis shows that such systems have invariant tori carrying linear flows, as well as other types of invariant submanifolds carrying generically periodic flows.
 
 The topology of invariant submanifolds of this problem have been studied by Tatarinov \cite{tatarinov1985,tatarinov1988} using surgery methods, and Fernandez-Bloch-Zenkov \cite{fernandez2014geometry} using a generalization of the Poincar\'e-Hopf theorem to manifold with boundary together with some detailed information about the geometry of the problem. It was shown that the invariant submanifolds of this problems can be surfaces of genus between zero and five.

 We will provide two further approaches for understanding the topology of the submanifolds, as well as the flows. The first is a direct construction that uses a Morse theoretic reasoning and in our opinion provides a better understanding of the geometry of the problem than the other approaches. The second is an application of the classical Poincar\'e-Hopf theorem for manifolds without boundary and requires only knowledge of the number of connected components of the manifold. Furthermore, we give a detailed analysis of the flow on the invariant submanifolds and find that, for certain values of the parameters, the system admits an additional integral of motion. The information thus obtained leads to concrete understanding of the physical motion of the problem. 

Suslov's system is an  example of an important  class of nonholonomic systems, namely, the quasi-Chaplygin systems introduced in  \cite{fedorov2006quasi}.  This  system is Hamiltonizable in a very precise sense, that we describe below. 
%
  A non-holonomic system $ (Q, \mathcal{D} , L) $   consists of a configuration manifold $ Q $, a Lagrangian $ L: TQ\to \mathbb{R}  $ and a non-integrable smooth distribution $ \mathcal{D} \subset T Q $ describing kinematic constraints. The equations of motion   are determined by the Lagrange-d'Alembert principle supplemented by the condition that the velocities are in $ \mathcal{D} $, explicitely we have 
\begin{equation}\label{eqn:Lagrange-dAlambert}
    \sum _{ i = 1 } ^{ n } \left( \frac{ \partial L } { \partial x _i } - \frac{ d } { dt } \frac{ \partial L } { \partial \dot x _i } \right) \eta  _i = 0, \qquad \text{for all} \quad \eta  \in \mathcal{D} _x.
\end{equation} 
A {\bf Chaplygin system} is a non-holonomic system $ (Q, \mathcal{D}, L)$  where $ Q $ has a principal bundle structure $ Q \to Q/ G $ with respect to a Lie group $ G  $, with a principal connection, and the distribution $ \mathcal{D} $ is the horizontal bundle of the connection. Therefore, given a vector $ Y  \in T _x Q $, we have the decomposition $ Y  = Y _h + Y _v $, with $ Y _h \in \mathcal{D} _x $, and $ X _v \in \mathcal{V}  _x $, where $ \mathcal{V}  $   is the vertical bundle, and $ \mathcal{V} _x $  is the tangent space to the fiber at $ x $. In addition one must require that the Lagrangian $ L $ is $ G $-invariant. 

A non-holonomic system is {\bf quasi-Chaplygin} if,  $ Q $ has a principal bundle structure $ Q \to Q/ G $ with respect to a Lie group $ G  $, but the connection is singular  on some $ G $-invariant subvariety $ S \subset Q $, that is,  for $ x \in S $  the sum $ \mathcal{D}  _x + \mathcal{V}  _x $ does not span the tangent space $ T _x Q $. Quasi-Chaplygin systems can be regarded as  usual Chaplygin systems in  $ Q \setminus S $, inside the set $ S$, however, they behave differently (see \cite{fedorov2006quasi}). 
It was shown in \cite{fedorov2006quasi} that Suslov's problem is a quasi-Chaplygin system with $ Q = SO (3) $ and $ G = SO (2) $. 

For Chaplygin systems the constrained Lagrangian $ L _c (x, \dot x)  = L (x, \dot x  _h) $ induces a  Lagrangian 
$  l : TM \to \mathbb{R}  $ via the identification $ TM \approx \mathcal{D} /G $ (note that $ \mathcal{D} /G $  
has the structure of a vector bundle with base space $ Q/ G $ and fiber $ \mathbb{R}    ^k $, with $ k = \dim (Q/G) $).
Suppose the Legendre transform exists for $ l $, which in local coordinates $ q $ on $ M $ is given by  $ (q , \dot q) \to \left( q , \frac{ \partial l } { \partial \dot q }\right) $. Under the Legendre transform, the system of equations \eqref{eqn:Lagrange-dAlambert} gives rise to a first order dynamical system on $ T ^\ast M $ with corresponding vector field $ X _{ nh } = \{ \cdot , H \} _{ AP }  $ for some function  $ H : T ^\ast M  \to \mathbb{R}  $, where $ \{ \cdot , \cdot \} _{ AP } $ is an almost Poisson bracket (i.e. a skew-symmetric bilinear operation on functions that satisfies the Leibniz identity but  fails to satisfy the Jacobi identity). 
We say that the system is {\bf Chaplygin Hamiltonizable}, if there is a nonvanishing function $ f :  Q/G\to \mathbb{R}   $ such that 
\[
    X _{ nh } = f (q) X _H 
\]
with  $ X _H = \{ \cdot , H \} $,  and $ \{ \cdot , \cdot \}  = \frac{ 1 } { f } \{ \cdot , \cdot \} _{ AP } $ is a Poisson bracket.

For a quasi-Chaplygin system, if there is a function $ f :  Q/G\to \mathbb{R}$, nonvanishing in $ Q \setminus S $, and   such that $ X _{ nh } = f (q) X _H $
for some $ H : T ^\ast M  \to \mathbb{R}  $,  we call the system {\bf quasi-Chaplygin Hamiltonizable}. The Suslov problem considered in this article is quasi-Chaplygin Hamiltonizable with $f = \gamma_3$ (\cite{fedorov2006quasi, fernandez2014geometry}).
For this type of systems, since the multiplier $f $ has zeroes, one hypothesis of  Theorem 1 in  \cite{kozlov2002integration} fails and thus the topology of invariant manifolds may differ from tori.
Here we give an explicit description of the invariant manifolds.  We expect that the more explicit approach taken in this article may be able to shed light on more general quasi-Chaplygin Hamiltonizable systems.

\vskip 0.5cm
The paper is organized as follows. In Section 2 we  present an overview of the Suslov's problem in the Klebsh-Tisserand case. We give an elementary derivation of the equations of motion (see \cite{fedorov2006quasi} for a derivation based on a Lagrangian approach). In Section 3 we give an explicit construction that allows us to determine the topology of the level surfaces $ S _k $. Then, in Section 4 we study the flow of the system on the surfaces $ S _k $, and we find that, for some parameter values there is one additional integral of motion. 
We then find  and classify the critical pints of the vector field. We use the Poincar\'e-Hopf theorem to give an alternative way to determine the  topology of the surfaces. In Section 5 we use the topology of $ S _k $ and the results on the dynamics of the problem to describe how the rigid body moves in the three dimensional physical space. 
 \section{Suslov's Problem with a Klebsh-Tisserand Potential} \label{sect:intro}
The Suslov problem describes the motion of a rigid body with a fixed point  subject to a nonholonomic constraint. 
Wagner \cite{wagner1941} suggested the following implementation of Suslov's model. He considered a rigid body, with a fixed point $ O $, moving inside a spherical shell. The rigid body is attached at $ O $ with  a spherical hinge so that it can turn around this point. 
The  nonholonomic constraint is realized  by considering two rigid caster wheels  attached to the rigid body  by a rod (see figure \ref{fig:Suslov}). 
These wheels force the angular velocity component along a  direction orthogonal to the rod  to vanish.

\begin{center} 
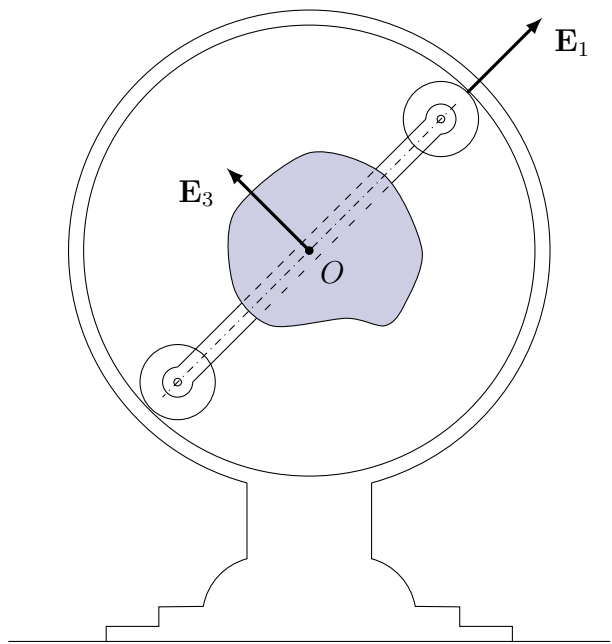
\begin{figure}[h]
\begin{tikzpicture}[>=latex]
\draw (0,0) circle (3cm);

\draw[] ([shift=(255:3.2cm)]0,0) arc (255:-75:3.2cm);
\draw ({3.2*cos(-75)},{3.2*sin(-75)})--({3.2*cos(-75)},-4.1)--({3.2*cos(-75)},-4.1) arc (75:10:0.8cm)--(2,-4.74)--(2,-5)--(2.7,-5)--(2.7,-5.2);
\draw ({3.2*cos(255)},{3.2*sin(255)})--({3.2*cos(255)},-4.1)--({3.2*cos(255)},-4.1) arc (105:170:0.8cm)--(-2,-4.74)--(-2,-5)--(-2.7,-5)--(-2.7,-5.2);
\draw (-4,-5.2)--(4,-5.2);
\draw (1.75,1.75) circle (0.5cm);
\draw (-1.75,-1.75) circle (0.5cm);
\draw (1.75,1.75) circle (0.05cm);
\draw (-1.75,-1.75) circle (0.05cm);

\draw[black] (1.75,1.55) -- (-1.55,-1.75);
\draw[black] (1.55,1.75) -- (-1.75,-1.55);


\draw[smooth cycle,fill=blue!30!gray!30!] plot[tension=0.5] coordinates{(-1,-1/2) (-1,1/2) (0,1.3)(2/2,2/2) (3/2,0)(4.5/2-1,0.5/2-1)(2/2,-1)(1/2,0.2/2-1)(-1/2,-1)};
\draw[] ([shift=(180:0.2cm)]1.75,1.75) arc (180:-90:0.2cm);
\draw[] ([shift=(90:0.2cm)]-1.75,-1.75) arc (90:360:0.2cm);
\draw[black,loosely dashed] (1.05,0.85) -- (-0.75,-0.95);
\draw[black,dashed] (0.85,1.05) -- (-0.95,-0.75);
\draw[black, dashdotted] (1.95,1.95) -- (-1.95,-1.95);
\draw[fill=black] (0,0) circle (0.05cm)  node[anchor=north west] {$O$};




\draw[->,very thick] (2.1,2.1)--(3.1,3.1)  node[anchor=north west] {$ \mathbf{E}  _1 $ };
\draw[->,very thick] (0,0)--(-1.1,1.1) node[anchor=north east] {$\mathbf{E}  _3 $};
\end{tikzpicture}
\caption{An implementation  of the Suslov problem suggested by Wagner.\label{fig:Suslov}}
\end{figure}
\end{center}

In this section we give an elementary derivation of the equations of motion for the Suslov's problem. These equations can also be obtained from a Lagrangian \cite{fedorov2006quasi}.
We begin by discussing the Euler equations for a rigid  body without constraint. Denote by $ \{ \mathbf{e}_1 , \mathbf{e}_2, \mathbf{e}_3 \} $ a right handed orthonormal basis of $ \mathbb{R}  ^3 $, called the \emph{spatial} frame. The coordinates of a point $P$ is the spacial frame is denoted $ \mathbf{x}$, which is also called the \emph{spatial} vector of $P$. Let $ \{ \mathbf{E}_1 , \mathbf{E}_2 , \mathbf{E}_3 \} $ be a right handed orthonormal frame, the \emph{body} frame, defined by the three principal axis and let $ \mathbf{X} $ denote the \emph{body} vector of the point $P$ in this frame. We have
\[\mathbf{x} = \mathbf{R} \mathbf{X}, \text{ where } \mathbf{R} \text{ is a rotation matrix}\]
Let $ \boldsymbol{\omega } $, $ \boldsymbol{\pi } $  and $ \boldsymbol{\tau} $ be the \emph{spacial} vector of angular velocity,  angular momentum, and  torque, respectively. The corresponding \emph{body} vectors are then given by
\[\boldsymbol{\Omega} = \mathbf{R}^{-1} \boldsymbol{\omega}, \boldsymbol{\Pi } = \mathbf{R} ^{ - 1 } \boldsymbol{\pi } \text{ and } \boldsymbol{T } = \mathbf{R} ^{ - 1 } \boldsymbol{\tau}\] 
Note that if   $ \mathbb{I}  = \operatorname{diag} (I_1 , I_2 , I_3) $ denotes the body inertia tensor then we can also write $ \boldsymbol{\Pi } = \mathbb{I}   \boldsymbol{\Omega } $.  
Now $\boldsymbol{\tau} = \boldsymbol{\dot\pi}$, the second cardinal equation of dynamics, can be rewritten as
\[
    \mathbf{R} \boldsymbol{T} = \boldsymbol{\tau} = \frac{ d } { d t }\left(  { \mathbf{R}  \boldsymbol{\Pi } }\right) =  {\bf \dot R } \boldsymbol{ \Pi }+ {\bf R } \boldsymbol{\dot \Pi } \Longrightarrow \boldsymbol{\dot \Pi } =   -{\bf  R } ^{ - 1 } {\bf \dot R } \boldsymbol{ \Pi } + \mathbf{T} 
\]
Let $\boldsymbol{ \hat \Omega } = \mathbf{R} ^{ - 1 } {\bf \dot R }$ be the \emph{hat map}, then we have
\[ \boldsymbol{ \hat \Omega } \boldsymbol{ \Pi } = \boldsymbol{\Omega } \times \boldsymbol{\Pi }\]
which gives
\[
    \boldsymbol{\dot \Pi }  = \boldsymbol{\Pi } \times \boldsymbol{\Omega }  + \mathbf{T}  
\]

Let $ \boldsymbol{\alpha }$, $ \boldsymbol{ \beta }$, and  $ \boldsymbol{\gamma } $ be the body vectors of $ \mathbf{e}_1, \mathbf{e}_2, $  and $ \mathbf{e}_3 $ respectively, e.g.
\[\boldsymbol{\gamma } = {\bf R } ^{ - 1 }  {\bf e }_3 \]
Then we obviously have
\[\| \boldsymbol{\gamma } \| ^2 = \gamma_1 ^2 + \gamma_2 ^2 + \gamma_3 ^2 = 1 .\]
Suppose the rigid body is placed in a force field with potential energy 
\[ u ( \mathbf{X})=  u \left(  \left\langle \mathbf{X} , \boldsymbol{ \gamma  } \right\rangle\right),\] 
where $ \left\langle\, , \,\right\rangle $ is the Euclidean inner product. 
Then the total potential energy of the rigid body is 
\[
 U (\boldsymbol{\gamma }) =  \int_{ B }  u \left( \left\langle \mathbf{X} , \boldsymbol{ \gamma  } \right\rangle\right)  \,d ^3 \mathbf{X}   
 \]
and the total  body force and torque  are 
\[\mathbf{F} = -\int_{ B } \frac{ \partial u } { \partial \mathbf{X} } \,d ^3 \mathbf{X},\quad \mbox{and }  \mathbf{T} =- \int_{ B }   
    \frac{ \partial u } { \partial \mathbf{X} }\times \mathbf{X}  \,d ^3 \mathbf{X},
\]
respectively. We have
\begin{align*} 
        \frac{ \partial u } { \partial \mathbf{X} } \times \mathbf{X} & =   \frac{ \partial u } { \partial \left\langle \mathbf{X} , \boldsymbol{\gamma  } \right\rangle } \frac{ \partial \left\langle \mathbf{X} , \boldsymbol{\gamma  } \right\rangle } { \partial \mathbf{X} } \times \mathbf{X}  = - \left( \frac{ \partial u } { \partial \left\langle \mathbf{X} , \boldsymbol{\gamma  } \right\rangle } \mathbf{X} \right) \times \boldsymbol{\gamma},
\end{align*} 
and similarly
\[
    \frac{ \partial u } { \partial \boldsymbol{\gamma } }= \frac{ \partial u } { \partial \left\langle \mathbf{X} , \boldsymbol{\gamma }\right\rangle  } \frac{   \partial \left\langle \mathbf{X} , \boldsymbol{\gamma }\right\rangle } { \partial \boldsymbol{\gamma } } =  \frac{ \partial u } { \partial \left\langle \mathbf{X} , \boldsymbol{\gamma }\right\rangle  }\mathbf{X}  
\]
which gives
\[
 \frac{ \partial u } { \partial \mathbf{X} } \times \mathbf{X} = - \frac{ \partial u } { \partial \boldsymbol{\gamma } } \times \boldsymbol{\gamma }.
\]
Since $ \boldsymbol{\gamma } $ does not depend on $ \mathbf{X} $, integrating yields the following expression for the torque:
\[
    \mathbf{T} = \frac{ \partial U } { \partial \boldsymbol{\gamma } } \times \boldsymbol{\gamma }
\]
and thus the dynamics equations are 
\[
    \boldsymbol{\dot \Pi } =  \boldsymbol{\Pi } \times \boldsymbol{\Omega }  +    \frac{ \partial U } { \partial \boldsymbol{\gamma} } \times \boldsymbol{\gamma }, \quad   \boldsymbol{\dot \gamma }=  \boldsymbol{\gamma } \times \boldsymbol{\Omega } 
\]

Let $ \mathbf{A} $ be a fixed unit body vector, and consider \emph{Suslov's nonholonomic constraint}
\[
    \left\langle \boldsymbol{\Omega }, \mathbf{A}\right\rangle  = 0
\]
Subjecting the rigid body to this constraint is equivalent to adding a torque $ \lambda \mathbf{A} $. The rigid body subject to this torque moves according to 
\[
    \boldsymbol{\dot \Pi } =  \boldsymbol{\Pi } \times \boldsymbol{\Omega }  +    \frac{ \partial U } { \partial \boldsymbol{\gamma} } \times \boldsymbol{\gamma } + \lambda \mathbf{A} , \qquad   \boldsymbol{\dot \gamma }=  \boldsymbol{\gamma } \times \boldsymbol{\Omega }, \qquad     \left\langle \boldsymbol{\Omega }, \mathbf{A}\right\rangle  = 0,
\]
where the first equation can also be written as 
\begin{equation} \label{eqn:multiplier}
    \mathbb{I}  \boldsymbol{\dot \Omega  } =  \mathbb{I} \boldsymbol{\Omega  } \times \boldsymbol{\Omega }  +    \frac{ \partial U } { \partial \boldsymbol{\gamma} } \times \boldsymbol{\gamma } + \lambda \mathbf{A}. 
\end{equation} 
Differentiating the constraint  gives $  \left\langle \mathbf{A} , \boldsymbol{\dot \Omega }\right\rangle  = 0$, and substituting \eqref{eqn:multiplier} into the constraint equation yields 
\[
    \left\langle \mathbf{A}, \mathbb{I} ^{ - 1 } \left(  \mathbb{I} \boldsymbol{\Omega  } \times \boldsymbol{\Omega }  +    \frac{ \partial U } { \partial \boldsymbol{\gamma} } \times \boldsymbol{\gamma } + \lambda \mathbf{A} \right) \right\rangle = 0 
\]
 and hence 
 \[
     \lambda = -\frac{    \left\langle \mathbf{A}, \mathbb{I} ^{ - 1 } \left(  \mathbb{I} \boldsymbol{\Omega  } \times \boldsymbol{\Omega }  +    \frac{ \partial U } { \partial \boldsymbol{\gamma} } \times \boldsymbol{\gamma }  \right) \right\rangle } { \left\langle \mathbf{A} , \mathbb{I} ^{ - 1 } \mathbf{A}\right\rangle   }.  
 \]

 Let us consider the case $ \mathbf{A} = {\bf E }_3 $, so that the constraint is simply $ \Omega_3 = 0 $. 
 Then the equation of the rigid body subject to Suslov's nonholonomic constraint are 
\begin{align*} 
    I_1 \dot \Omega_1 = \gamma_2 \frac{ \partial U } { \partial \gamma_3 } - \gamma_3 \frac{ \partial U } { \partial \gamma_2 },  \quad I_2 \dot \Omega_2 = \gamma_3 \frac{ \partial U } { \partial \gamma_1 } - \gamma_1 \frac{ \partial U } { \partial \gamma_3 },
\end{align*}
and 
\[
    \dot \gamma_1 = - \gamma_3 \Omega_2 , \quad \dot \gamma_2 = \gamma_3 \Omega_1 , \quad \dot \gamma_3 =  \gamma_1 \Omega_2 - \gamma_2 \Omega_1,
\]
where we omitted the equation for $ \dot \Omega  _3 $ since it  is used only to determine $ \lambda $ and can be omitted. 
We consider the Klebsh-Tisserand case of the Suslov problem, i.e.
\[ U (\gamma) = \frac{1}{2} \left( B_1 \gamma_1 ^2 + B_2 \gamma_2 ^2\right)\]
Then  the equations of motion in terms of the momenta are given by  
\begin{align*} 
     \dot \Pi _1 & = -B_2 \gamma_2 \gamma_3 \\
     \quad  \dot \Pi  _2 & = B_1 \gamma_1 \gamma_3\\
    \dot \gamma_1 & = - \gamma_3 \frac{ \Pi_2 } { I_2 }  \\
    \quad \dot \gamma_2 & = \gamma_3 \frac{ \Pi_1 } { I_1 }  \\
    \quad \dot \gamma_3 & =  \gamma_1 \frac{ \Pi_2 } { I_2 }  - \gamma_2 \frac{ \Pi_1 } { I_1 } . 
\end{align*} 
The following functions are easily seen to be integrals of motion of the equations above
\[
    F_1 : = \frac{ \Pi_1 ^2 } { I_1 } + B_2 \gamma_2 ^2, \quad F_2 := \frac{ \Pi _2 ^2 } { I_2 } + B_1 \gamma_1 ^2. 
\]

 Understanding of the Suslov problem now reduces to understanding of the flows on the level surfaces $F_1^{-1}(K_1) \inter F_2^{-1}(K_2)$ defined by the integrals of motion.
 It is convenient to perform the following change of variables:
\[
    m_1 = -\frac{ \Pi_2 } { I_2 }, \quad b_1 = \frac{ B_1 } { I_2 }, \quad k_1 = \frac{ K_1 } { I_2 };\quad  
    m_2 = -\frac{ \Pi_1 } { I_1 }, \quad b_2 = \frac{ B_2 } { I_1 }, \quad k_2 = \frac{ K_2 } { I_1 }.
\]
and consider the system as defined in $\RR^5$, with coordinates $(m_1, m_2, \gamma_1, \gamma_2, \gamma_3)$, subject to the restriction
\[\gamma_1^2 + \gamma_2^2 + \gamma_3^2 = 1\]
Then the integrals of motion become $ f_1 = m_1 ^2 + b_1 \gamma_1 ^2 $ and $ f_2 = m_2 ^2 + b_2 \gamma_2 ^2 $. We write down the equations defining the level surfaces $S_k := f_1^{-1}(k_1) \inter f_2^{-1}(k_2)$:
 \begin{equation}\label{eq:suslovsurface}
  \begin{cases}
   \displaystyle{m_1^2 + b_1\gamma_1^2} = k_1 \\
   \displaystyle{m_2^2 + b_2\gamma_2^2} = k_2 \\
   \gamma_1^2 + \gamma_2^2 + \gamma_3^2 = 1
  \end{cases}
 \end{equation}
 We also write the equations of motion in the new coordinates:
 \begin{equation}\label{eq:suslovflow}
  \begin{cases}
   \dot m_1 = -b_1 \gamma_1\gamma_3 \\
   \dot m_2 = b_2 \gamma_2 \gamma_3 \\
   \displaystyle{\dot \gamma_1 = m_1 \gamma_3} \\
   \displaystyle{\dot \gamma_2 = -m_2 \gamma_3} \\
   \displaystyle{\dot \gamma_3 = \gamma_2 m_2 - \gamma_1 m_1}.
  \end{cases}
 \end{equation}
 We denote by $ X = (X_1 , X_2 , X_3 , X_4 , X_5) : \mathbb{R}  ^5 \to \mathbb{R}  ^5 $, the vector field associated with the equations above.
 
\section{Topology of the level surfaces \texorpdfstring{$S_k$}{Sk} via an explicit construction}  \label{sect:Suslovproblemdim3}
In this section we want to describe the topology of the level surfaces $ S _k $ by giving an explicit construction. Our method differs from the surgery approach pioneered by Tatarinov \cite{tatarinov1985,tatarinov1988}. Note that while it is difficult to find the original work of Tatarinov, an exposition of his method can be found in \cite{fomenko2012visual} and \cite{fernandez2014geometry}.  

We start by finding the values of $ k = (k _1 , k _2) $ for which $ S _k $  is non-singular.
\begin{lemma}\label{lemma:singularlevels}

  The subspace $S_k$ is a smooth manifold of dimension $2$ iff $k_1k_2 \neq 0$ and all of the following holds: $k_1 \neq b_1$, $k_2 \neq b_2 $ and $\displaystyle{\frac{k_1}{b_1} + \frac{k_2}{b_2} \neq 1}$.
 \end{lemma}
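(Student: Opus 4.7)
The natural approach is to view $S_k$ as a preimage under a smooth map and apply the regular value theorem. Concretely, I would define $F \co \RR^5 \to \RR^3$ by
\[
F(m_1,m_2,\gamma_1,\gamma_2,\gamma_3) = \bigl(m_1^2+b_1\gamma_1^2,\ m_2^2+b_2\gamma_2^2,\ \gamma_1^2+\gamma_2^2+\gamma_3^2\bigr),
\]
so that $S_k = F^{-1}(k_1,k_2,1)$. The problem then reduces to determining the values of $(k_1,k_2)$ for which $(k_1,k_2,1)$ is a regular value of $F$; equivalently, for which the $3\times 5$ Jacobian $DF$ has rank $3$ at every point of $S_k$.

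The rank drops at a point precisely when there exist scalars $(\lambda_1,\lambda_2,\lambda_3)\neq 0$ such that $\lambda_1\nabla f_1+\lambda_2\nabla f_2+\lambda_3\nabla f_3=0$, which in coordinates is the system
\[
\lambda_1 m_1 = 0,\quad \lambda_2 m_2 = 0,\quad (\lambda_1 b_1+\lambda_3)\gamma_1=0,\quad (\lambda_2 b_2+\lambda_3)\gamma_2=0,\quad \lambda_3\gamma_3=0.
\]
I would then run a short case analysis on which of $\gamma_1,\gamma_2,\gamma_3$ vanish. If $\gamma_3\neq 0$ then $\lambda_3=0$ and one is forced into $m_i=\gamma_i=0$ for some $i\in\{1,2\}$, producing $k_i=0$. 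If $\gamma_3=0$ (so $\gamma_1^2+\gamma_2^2=1$), the subcases $\gamma_1=0$ and $\gamma_2=0$ yield either $k_2=b_2$ (respectively $k_1=b_1$) or again $k_i=0$; while the generic subcase $\gamma_1\gamma_2\neq 0$ forces $m_1=m_2=0$, which together with $\gamma_1^2+\gamma_2^2=1$ gives precisely $k_1/b_1+k_2/b_2=1$.

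This shows that the critical set of $F$ intersects $S_k$ exactly when one of the excluded equalities holds, proving the ``only if'' direction. For the ``if'' direction, one observes that in each excluded case the singular points one has identified genuinely lie on $S_k$ (the remaining equations can be solved for $\gamma_3$ and the surviving momentum), so the level set fails to be a smooth $2$--manifold at those points.

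The plan therefore reduces to the linear-algebra bookkeeping above; there is no serious obstacle. The only mildly delicate point is the ``if'' direction, where one has to verify that the candidate singular points are not vacuously absent from $S_k$ (which, under the nondegeneracy $b_1,b_2\neq 0$ implicit in the setup, is immediate).
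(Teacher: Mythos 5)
Your proposal is correct and follows essentially the same route as the paper: both view $S_k$ as a level set of the three defining functions, form the $3\times 5$ matrix of their gradients, and determine exactly where it drops below rank $3$ on $S_k$. The only cosmetic difference is that you detect rank deficiency via a nontrivial vanishing linear combination of the rows with a case split on $\gamma_1,\gamma_2,\gamma_3$, whereas the paper inspects the matrix directly with a case split on $m_1,m_2$; both arrive at the same excluded loci $k_1k_2=0$, $k_1=b_1$, $k_2=b_2$, $k_1/b_1+k_2/b_2=1$, and both leave the final step (that a genuine rank drop at a point of $S_k$ actually destroys smoothness there) at the same informal level.
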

 \begin{proof}
   The matrix formed by the gradients of the defining equations is
\begin{equation}\label{eqn:gradients}  2
    \begin{pmatrix}
     m_1 & 0 & b_1\gamma_1 & 0 & 0\\
     0 & m_2 & 0 & b_2\gamma_2 & 0\\
     0 & 0 & \gamma_1 & \gamma_2 & \gamma_3\\ 
    \end{pmatrix}
\end{equation} 
   and $S_k$ is smooth iff matrix above to have maximal rank $3$ at all points on $S_k$. Obviously, the matrix is of full rank when $m_1 m_2 \neq 0$; while it is degenerate when $k_1k_2 = 0$.
  
   For $m_1 = m_2 = 0$, we have, $b_1\gamma_1^2 = k_1$ and $b_2 \gamma_2^2 = k_2$. It follows that
   $$\gamma_3^2 = 1 - \frac{k_1}{b_1} - \frac{k_2}{b_2}$$
   The rank $3$ condition requires $\displaystyle{\frac{k_1}{b_1} + \frac{k_2}{b_2} \neq 1}$.
  
   For $m_1 = 0$ but $m_2 \neq 0$, we have $b_1\gamma_1^2 = k_1$. Rank $3$ condition requires one of $\gamma_2$ or $\gamma_3$ is non-zero. We have $\gamma_2 = \gamma_3 = 0 \iff b_1 = k_1$, thus full rank implies $k_1 \neq b_1$.
  
   The case $m_1 \neq 0$ but $m_2 = 0$ gives $k_2 \neq b_2$.
  \end{proof}
  A simple consequence of the Lemma is that the topology of the level sets  $ S _k $ can only change at values of $ k $ where $ S _k $ is singular. This is made precise in the following corollary.  
 \begin{corollary}\label{coro:5regions} 
  The first quadrant in the $(k_1, k_2)$-plane is divided into $5$ regions  by
  $$k_1 = b_1, k_2 = b_2 \text{ and } \frac{k_1}{b_1} + \frac{k_2}{b_2} = 1$$
  The subspace $S_k$ has the same topological type for $k$ in each region (c.f. Figure \ref{fig:bifurcations}).
 \end{corollary}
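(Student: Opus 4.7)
The plan is to establish the statement in two steps: (i) verify that the three curves cut the open first quadrant into exactly five connected components, and (ii) show that within each component the level set $S_k$ has a fixed topological type.

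For step (i), I would observe that the vertical line $k_1=b_1$ and the horizontal line $k_2=b_2$ partition the open first quadrant into four open rectangles meeting at the corner $(b_1,b_2)$. The diagonal line $k_1/b_1+k_2/b_2=1$ passes through $(b_1,0)$ and $(0,b_2)$, so it is entirely contained in the closed rectangle $[0,b_1]\times[0,b_2]$ (touching its boundary only at those two vertices). Consequently, the diagonal splits only the bottom-left rectangle $\{0<k_1<b_1,\,0<k_2<b_2\}$ into two open triangles, while leaving the other three rectangles untouched. This gives $3+2=5$ open regions, and no further splittings occur. A picture would make this transparent; I would refer to the figure mentioned in the corollary.

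For step (ii), the strategy is to invoke Ehresmann's fibration theorem. Let $M=\RR^2\times S^2\subset\RR^5$ (with coordinates $(m_1,m_2,\gamma_1,\gamma_2,\gamma_3)$ and the constraint $|\boldsymbol{\gamma}|=1$), and consider the smooth map
\[
F\co M\to\RR^2,\qquad F(m_1,m_2,\boldsymbol{\gamma})=\bigl(m_1^2+b_1\gamma_1^2,\ m_2^2+b_2\gamma_2^2\bigr).
\]
Then $S_k=F^{-1}(k)$. Let $R$ be any one of the five open regions from step (i). I would verify two properties of $F$ over $R$:

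\emph{Submersion.} By Lemma \ref{lemma:singularlevels}, for every $k\in R$ the gradients of $f_1$, $f_2$, and the sphere constraint are linearly independent at every point of $S_k$; this is precisely the statement that $F|_{F^{-1}(R)}$ is a submersion onto $R$.

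\emph{Properness.} The level set $S_k$ is bounded in $\RR^5$: the sphere constraint bounds $\boldsymbol{\gamma}$, while $m_1^2\leqslant k_1$ and $m_2^2\leqslant k_2$ bound $m_1,m_2$. For $k$ ranging over any compact subset of $R$, these bounds are uniform, so $F^{-1}$ of a compact set in $R$ is compact. Hence $F\big|_{F^{-1}(R)}\co F^{-1}(R)\to R$ is a proper submersion.

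By Ehresmann's theorem, $F|_{F^{-1}(R)}$ is a locally trivial smooth fiber bundle over $R$. Since $R$ is (path-)connected, all fibers $S_k$ for $k\in R$ are pairwise diffeomorphic, in particular homeomorphic, proving the claim.

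The argument is essentially routine once one has Lemma \ref{lemma:singularlevels} in hand; the only potentially fussy point is the region count in step (i), which is why I would accompany it with Figure \ref{fig:bifurcations} rather than an exhaustive case analysis. The compactness/properness observation is the small piece one must not omit, since Ehresmann's theorem fails for non-proper submersions.
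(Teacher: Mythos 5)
Your proposal is correct and makes precise exactly the argument the paper leaves implicit: the paper asserts the corollary as a ``simple consequence'' of Lemma \ref{lemma:singularlevels} via the principle that the topology of level sets can only change at singular values, and your Ehresmann-fibration argument (proper submersion over each connected component of regular values, hence a locally trivial bundle with mutually diffeomorphic fibers) is the standard justification of that principle, with the properness verified correctly from the compactness of the $S_k$. The region count also matches Figure \ref{fig:bifurcations}, so nothing is missing.
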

    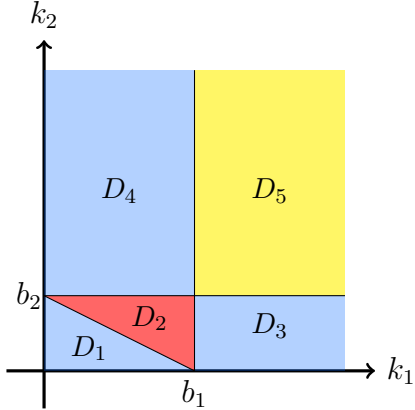
\begin{figure}[!ht]
   \begin{minipage}[c]{0.33\textwidth}
    \begin{tikzpicture}[scale=2]
     \draw[very thick,->] (-0.25,0) -- (2.2,0) node[right] {$k_1$};
     \draw[very thick,->] (0,-0.25) -- (0,2.2) node[above] {$k_2$};

     \fill[yellow,opacity=0.6] (1,0.5) -- (2,0.5) -- (2,2) -- (1,2)--cycle;
     \fill[red,opacity=0.6] (1,0) -- (1,0.5) -- (0,0.5) --cycle;
     \fill[blue!60!cyan,opacity=0.3] (1,0) -- (2,0) -- (2,0.5) -- (1,0.5)--cycle;
     \fill[blue!60!cyan,opacity=0.3] (0,0.5) -- (0,2) -- (1,2) -- (1,0.5)--cycle;
     \fill[blue!60!cyan,opacity=0.3] (0,0) -- (0,0.5) -- (1,0) --cycle;

     \draw (1,0) -- (0,0.5);
     \draw (1,0) --  (1,2);
     \draw (0,0.5) --  (2,0.5);
     \node at (1.5,1.2) {\small$D_5$ };
     \node at (0.7,0.35) {\small$D_2$ };
     \node at (0.3,0.15) {\small$D_1$ };
     \node at (1.5,0.3) {\small$D_3$ };
     \node at (0.5,1.2) {\small$D_4$ };
     \node at (1,-0.15) {$ b_1   $ };
     \node at (-0.1,0.5) {$ b_2   $ };
    \end{tikzpicture}
   \end{minipage}
   \begin{minipage}[c]{0.65\textwidth}
    \caption{\label{fig:bifurcations} 
     \small $ S_k $ is a smooth two-manifold (possibly disconnected) when $(k_1, k_2)$ lies in the interiors of the shaded regions. For example, we will see that in the region $D_5$ shaded yellow, $S_k = 4S^2 = S^2 \sqcup S^2 \sqcup S^2 \sqcup S^2$.
    }
   \end{minipage}
   \end{figure}
 
 The level surfaces $S_k$ are complete intersections. In \eqref{eq:suslovsurface}, the first two equations define a $2$-torus $T^2_k \subset \RR^4$, with coordinates $(m_1, m_2, \gamma_1, \gamma_2)$, while the last equation defines the unit $2$-sphere $S^2 \subset \RR^3$, with coordinates $(\gamma_1, \gamma_2, \gamma_3)$. It is therefore natural to study the level sets $ S _k $ by analyzing their projections onto these well understood surfaces. The projection onto the torus will be studied in the next subsection and it will be crucial in determining the topology of the level surfaces. The projection onto the  unit $2$-sphere $S^2$ will  be studied in Subsection \ref{subsect:imageonPoissonsphere} to explain the connection between the topology of the $ S _k $ and the motion of the rigid body in physical space.

 \subsection{Projection onto the torus} \label{subsect:angularparam}
 In order to describe the projection of the level surfaces $ S _k $ onto the $2$-torus $T^2_k$
it is convenient to use a standard parametrization and describe the torus as the square flat torus.  
 Since  the first equation in \eqref{eq:suslovsurface} defines an ellipse in the $(m_1, \gamma_1)$-plane and  the second equation defines an ellipse in the   $(m_2, \gamma_2)$-plane, we parametrize these ellipses by the angle in the respective polar coordinates:
  \begin{equation}\label{eqn:parametrization}  \left\{
  \begin{matrix}
   m_1  = & \sqrt{k_1}\cos \theta_1\\
   \gamma_1  = & \sqrt{\frac{k_1}{b_1}} \sin \theta_1
  \end{matrix}
  \right. \text{ and }
  \left\{
  \begin{matrix}
   m_2 = & \sqrt{k_2}\sin \theta_2\\
   \gamma_2  = & \sqrt{\frac{k_2}{b_2}} \cos \theta_2
  \end{matrix}
  \right.
  ,~\theta_1 \in \left[-\frac{\pi}{2}, \frac{3\pi}{2}\right) \text{ and } \theta_2 \in \left[0, 2\pi\right)
  \end{equation} 
Here the square $ \left[ - \frac{ \pi } { 2 } ,  \frac{3\pi}{2}\right] \times  \in \left[0, 2\pi\right] $  is  viewed  as the square flat torus $ T ^2 $ by   identifying the top side of the square with the bottom side, and the left side with the right side. 

  Then the parametrization above  defines the following   isomorphism from $T^2_k$ to the standard torus $T^2$:
  \[\varphi_k: T^2_k \xto{\cong} T^2 : (m_1, m_2, \gamma_1, \gamma_2)  \mapsto  (\theta_1, \theta_2)\]
  By definition, for each $k$, we see that $S_k \subset T^2_k \times \RR$, where the coordinate on the second factor is given by $\gamma_3$. Let $p_k : S_k \to T^2_k$ be the projection induced by the projection of $T^2_k \times \RR$ to the first factor.
  The dependence of $S_k$ on $k$ can be described using $\varphi_k\circ p_k$.

 To describe the projection of the surfaces $ S _k $ onto the torus (or more precisely the image of $ S _k $ under the map $ \phi _k \circ p _k $) it is convenient to introduce the function $ g_k : T ^2 \to \mathbb{R}  $ defined as  
  \[g_k(\theta_1, \theta_2) = \displaystyle{\frac{k_1}{b_1} \cos^2\theta_1 + \frac{k_2}{b_2} \sin^2\theta_2}\]
  and let $k = (k_1, k_2)$.  We denote by $ U_k $ the subset of $ T ^2 $ consisting of all points at which $ g_k $ takes values greater than the real number $ \epsilon_k $, that is we set   
  \[U_k = \left\{(\theta_1, \theta_2)\in T ^2  : g_k  > \varepsilon_k = \frac{k_1}{b_1} + \frac{k_2}{b_2} - 1\right\} \subseteq T^2\] 
  and let $\partial U_k = g_k^{-1}(\varepsilon_k)$. Let $\bar U_k = U_k \union \partial U_k$ be the closure of $U_k$. Figure \ref{fig:U} 
 shows the set $ \bar U _k $ for various values of $k$.

\begin{figure}[h!]
    \centering
    \begin{subfigure}[b]{0.3\textwidth}
        \includegraphics[width=\textwidth]{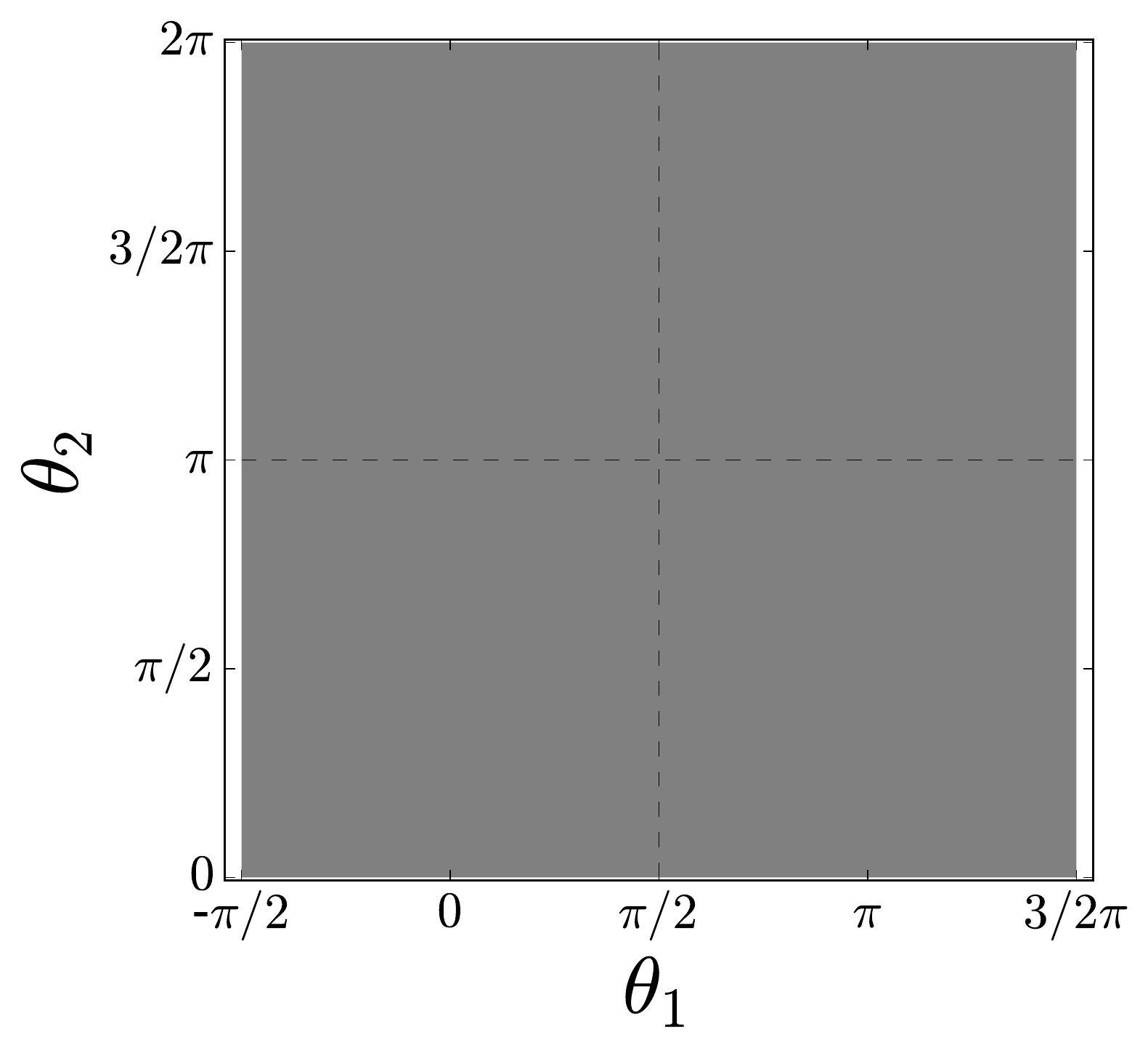}
        \caption{ $ \bar U_k $ for $ k \in D_1 $}
        \label{fig:D1}
    \end{subfigure}
    ~ 
    \begin{subfigure}[b]{0.3\textwidth}
        \includegraphics[width=\textwidth]{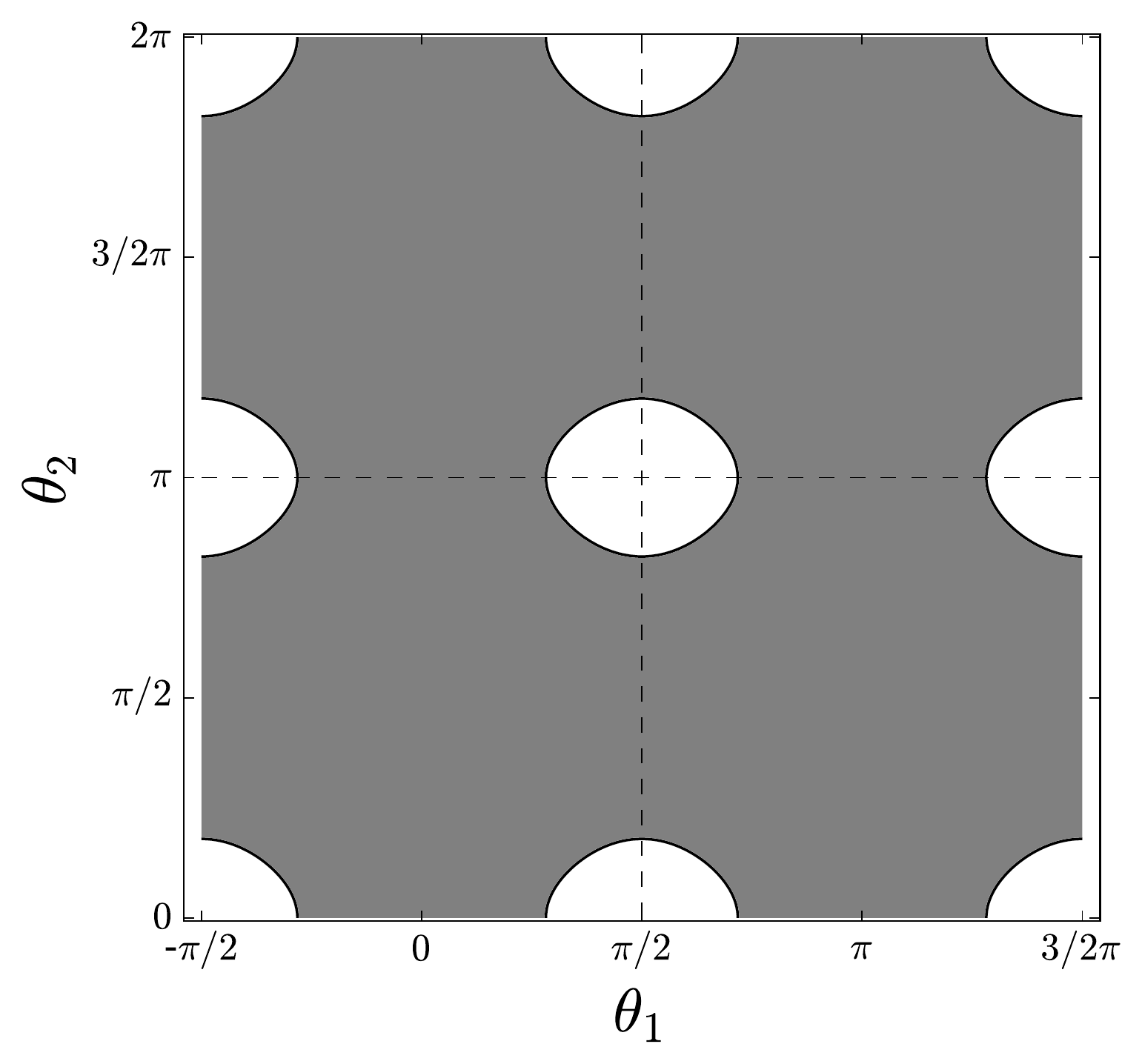}
        \caption{ $ \bar U_k $ for $ k \in D_2 $ }
        \label{fig:D2}
    \end{subfigure}
    ~ 
    \begin{subfigure}[b]{0.3\textwidth}
        \includegraphics[width=\textwidth]{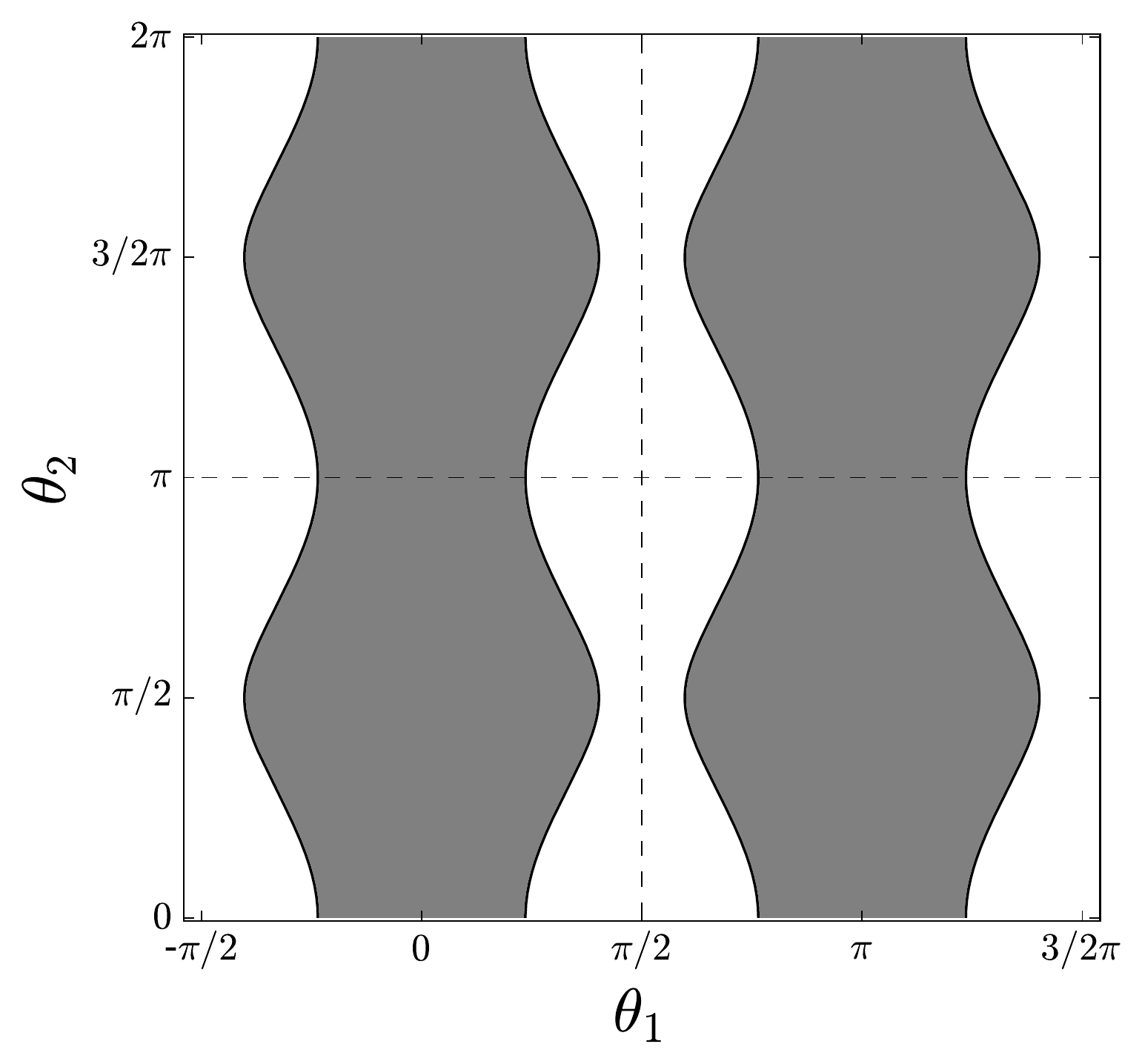}
        \caption{$ \bar U_k $ for $ k \in D_3 $ }
        \label{fig:D3}
    \end{subfigure}
        \begin{subfigure}[b]{0.3\textwidth}
        \includegraphics[width=\textwidth]{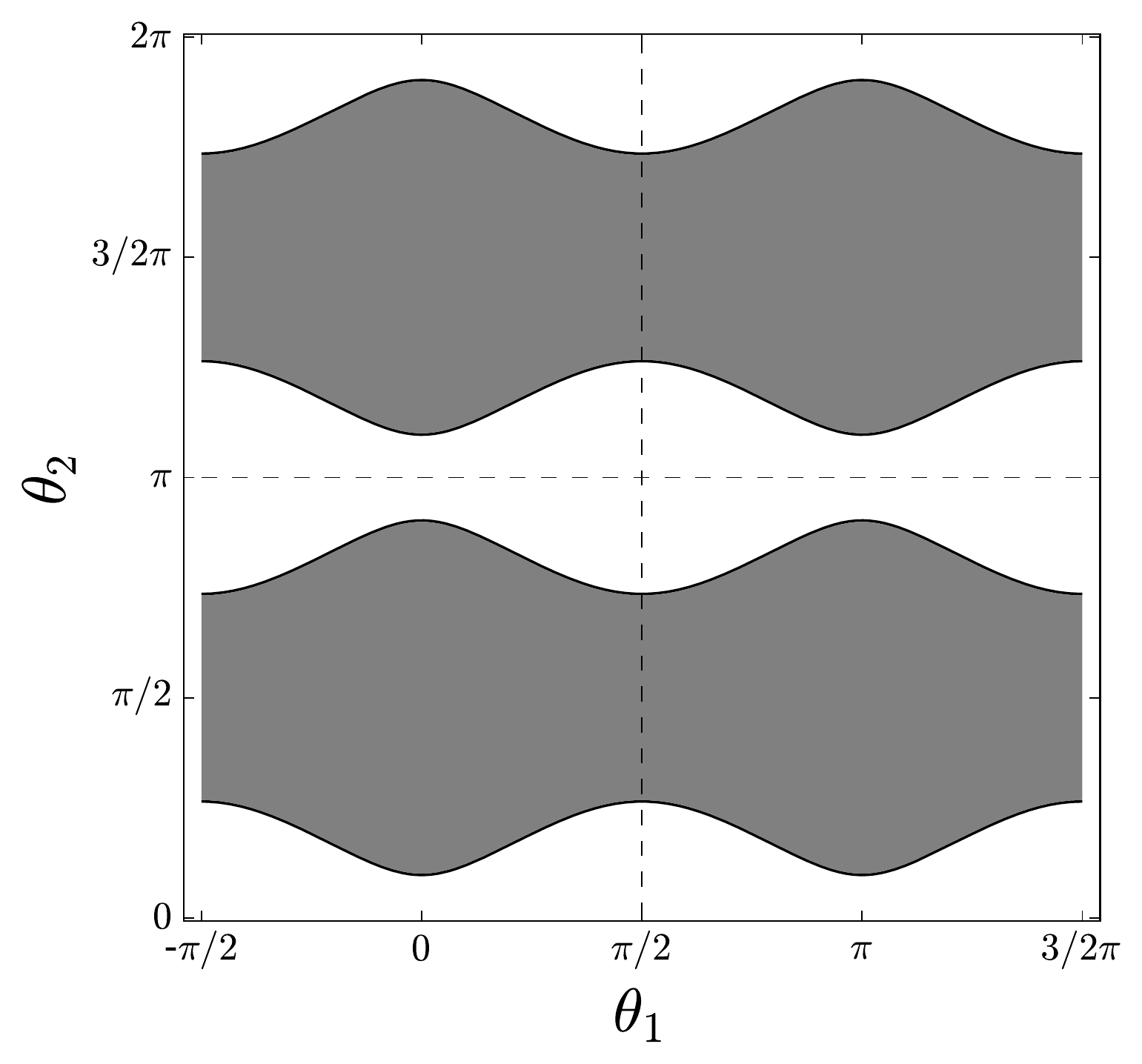}
        \caption{ $ \bar U_k $ for $ k \in D_4 $}
        \label{fig:D4}
    \end{subfigure}\hspace {1 cm} 
     \begin{subfigure}[b]{0.3\textwidth}
        \includegraphics[width=\textwidth]{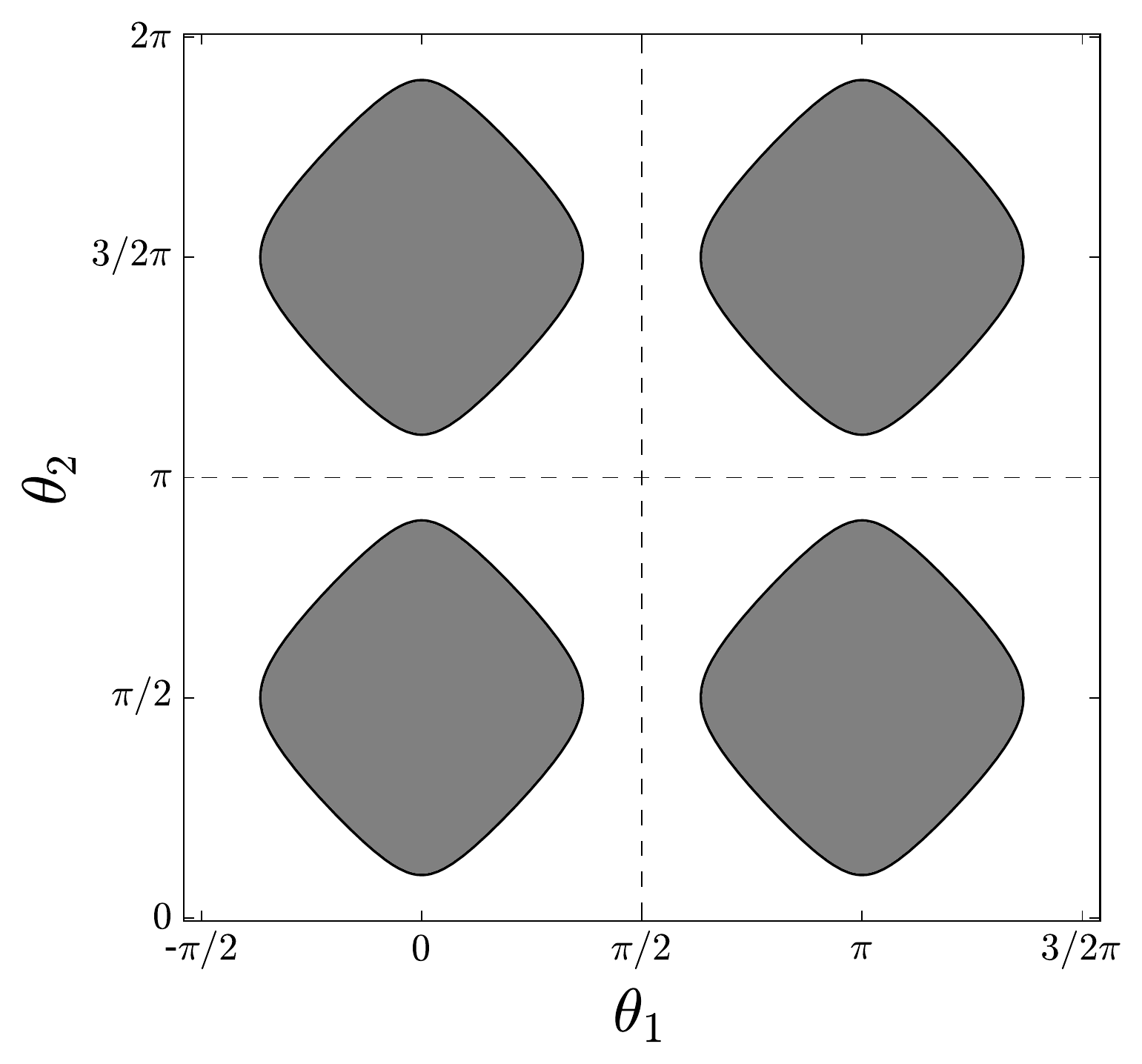}
        \caption{$ \bar U_k $ for $ k \in D_5 $}
        \label{fig:D5}
    \end{subfigure}
    \caption{The set $ \bar U_k $ for various values of $ k $.}\label{fig:U}
\end{figure}

  The following Lemma shows that the set $ \bar U_k $ is the image of $ S_k $ under the map $ \varphi_k \circ p_k $ and gives a  characterization of  such image.  
  \begin{lemma}\label{lemma:projectionimage}
   For all $k$, we have $\bar U_k = \varphi_k\circ p_k(S_k)$. The map $\varphi_k \circ p_k : S_k \to \bar U_k$ is $2$-to-$1$ over the interior $U_k$ and is $1$-to-$1$ over the boundary $\partial U_k$, if $\partial U_k \neq \emptyset$. Thus, $S_k$ is homomorphic to the surface obtained by attaching two copies of $\bar U_k$ along $\partial U_k$.
  \end{lemma}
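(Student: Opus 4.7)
The plan is to use the parametrization \eqref{eqn:parametrization} to rewrite the three defining equations of $S_k$ as a single inequality on $T^2$ together with a choice of sign for $\gamma_3$. The first two equations in \eqref{eq:suslovsurface} are already solved by the parametrization, so the only remaining condition that a point $(\theta_1, \theta_2, \gamma_3) \in T^2_k \times \RR$ lies in $S_k$ is the spherical constraint $\gamma_1^2 + \gamma_2^2 + \gamma_3^2 = 1$. Substituting the parametrization into this constraint gives
\[
\gamma_3^2 = 1 - \tfrac{k_1}{b_1}\sin^2\theta_1 - \tfrac{k_2}{b_2}\cos^2\theta_2.
\]

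The first step is to rewrite the right-hand side using the identity
\[
\tfrac{k_1}{b_1}\sin^2\theta_1 + \tfrac{k_2}{b_2}\cos^2\theta_2 = \tfrac{k_1}{b_1} + \tfrac{k_2}{b_2} - g_k(\theta_1,\theta_2),
\]
so that $\gamma_3^2 = g_k(\theta_1,\theta_2) - \varepsilon_k$. Hence a fibre of $\varphi_k\circ p_k$ over $(\theta_1,\theta_2)$ is non-empty if and only if $g_k(\theta_1,\theta_2) \geq \varepsilon_k$, which by definition of $\bar U_k$ says exactly $(\theta_1,\theta_2)\in\bar U_k$. This proves $\varphi_k\circ p_k(S_k) = \bar U_k$.

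Next I would read off the cardinality of the fibres directly from the equation $\gamma_3^2 = g_k - \varepsilon_k$: over the interior $U_k$ we have $g_k - \varepsilon_k > 0$, giving two choices $\gamma_3 = \pm\sqrt{g_k-\varepsilon_k}$, while over $\partial U_k = g_k^{-1}(\varepsilon_k)$ the unique solution is $\gamma_3 = 0$. This gives the stated $2$-to-$1$ and $1$-to-$1$ behaviour.

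For the gluing statement, I would split $S_k = S_k^+ \cup S_k^-$ according to the sign of $\gamma_3$; the formula $\gamma_3 = \pm\sqrt{g_k-\varepsilon_k}$ shows that each of $\varphi_k\circ p_k|_{S_k^\pm}\co S_k^\pm \to \bar U_k$ is a continuous bijection whose inverse is manifestly continuous on $\bar U_k$, hence a homeomorphism, and the two pieces $S_k^{\pm}$ meet exactly where $\gamma_3=0$, i.e.\ over $\partial U_k$. Thus $S_k$ is homeomorphic to two copies of $\bar U_k$ glued along $\partial U_k$. No step here looks genuinely difficult; the only point requiring a little care is verifying that the parametrization $\varphi_k$ is actually an isomorphism onto $T^2$ (so that the two sheets are topologically $\bar U_k$ and not some multiple cover thereof), but this is immediate from \eqref{eqn:parametrization} once one checks the identifications of opposite sides of the fundamental square.
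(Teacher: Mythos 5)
Your proof is correct and follows essentially the same route as the paper: substitute the parametrization into the spherical constraint, observe $\gamma_3^2 = g_k - \varepsilon_k$, and read off the fibre cardinalities. The only difference is that you spell out the final gluing step (splitting $S_k$ by the sign of $\gamma_3$ and checking each sheet maps homeomorphically onto $\bar U_k$), which the paper leaves implicit.
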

 \begin{proof}
   The equation defining $S_k$ in $T^2_k \times \RR$ is $\gamma_1^2 + \gamma_2^2 + \gamma_3^2 = 1$. It follows that $\varphi_k\circ p_k(S_k)$ is the subset of $T^2$ defined by
   \[1 \vargeq \gamma_1^2 + \gamma_2^2 = \frac{k_1}{b_1}\sin^2\theta_1 + \frac{k_2}{b_2} \cos^2\theta_2\]
   This is exactly $\bar U_k$. Over $U_k$, the strict inequality above holds, which implies that $\gamma_3 \neq 0$ takes $2$ distinct values. It follows that $\varphi_k\circ p_k$ is $2$-to-$1$ over $U_k$. Over the boundary $\partial U_k$, the equality holds and it implies that $\gamma_3 = 0$. Thus $\varphi_k\circ p_k$ is $1$-to-$1$ along $\partial U_k$ whenever it is not empty.
  \end{proof}
 A consequence of this result is that we can use  the shape of the set  $ \bar U_k $ to characterize the geometry and the topology of the surfaces $ S_k $. The following Lemma describes some feature  of the function $ g_k $ that can be used to describe the shape of the sets $ \bar U_k $. 
 \begin{lemma}\label{lemma:glevels}
  The smooth function $g_k$ on $T^2$ is a Morse function. The $16$ critical points are independent of $k$, with $4$ critical points on each of the $4$ critical levels:
    \begin{enumerate}
   \item Minimums at $\displaystyle{\left\{-\frac{\pi}{2}, \frac{\pi}{2}\right\} \times \left\{0, \pi\right\} }$, with $g_k = 0$.
   \item Saddles at $\displaystyle{\left\{-\frac{\pi}{2}, \frac{\pi}{2}\right\} \times \left\{\frac{\pi}{2}, \frac{3\pi}{2}\right\}}$, with $g_k  = \displaystyle{\frac{k_1}{b_1}}$.
   \item Saddles at $\displaystyle{\left\{0, \pi\right\} \times \left\{0,\pi\right\}}$, with $g_k  = \displaystyle{\frac{k_2}{b_2}}$.
   \item Maximums at $\displaystyle{\left\{0, \pi\right\} \times \left\{\frac{\pi}{2}, \frac{3\pi}{2}\right\}}$, with $g_k  = \displaystyle{\frac{k_1}{b_1} + \frac{k_2}{b_2}}$.
  \end{enumerate}
 \end{lemma}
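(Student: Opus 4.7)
The plan is to observe that $g_k$ has separated variables, namely $g_k(\theta_1,\theta_2) = h_1(\theta_1) + h_2(\theta_2)$ where $h_1(\theta_1) = \frac{k_1}{b_1}\cos^2\theta_1$ and $h_2(\theta_2) = \frac{k_2}{b_2}\sin^2\theta_2$. Consequently the gradient $\nabla g_k = (h_1'(\theta_1), h_2'(\theta_2))$ and the Hessian is diagonal, so the problem reduces to two independent one-variable Morse analyses.

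First I would compute
\[
\frac{\partial g_k}{\partial\theta_1} = -\frac{k_1}{b_1}\sin(2\theta_1), \qquad \frac{\partial g_k}{\partial\theta_2} = \frac{k_2}{b_2}\sin(2\theta_2).
\]
Since we are in the smooth regime $k_1 k_2 \neq 0$ (guaranteed by Lemma~\ref{lemma:singularlevels}, or more generally we may assume $k_1,k_2 > 0$), the vanishing of the gradient forces $\sin(2\theta_1) = 0$ and $\sin(2\theta_2) = 0$. On the fundamental domain $[-\tfrac{\pi}{2},\tfrac{3\pi}{2})\times[0,2\pi)$ this yields $\theta_1 \in \{-\tfrac{\pi}{2},0,\tfrac{\pi}{2},\pi\}$ and $\theta_2 \in \{0,\tfrac{\pi}{2},\pi,\tfrac{3\pi}{2}\}$, giving $4\times 4 = 16$ critical points, independent of $k$, as claimed.

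Next I would compute the Hessian, which is diagonal with entries
\[
\frac{\partial^2 g_k}{\partial\theta_1^2} = -\frac{2k_1}{b_1}\cos(2\theta_1), \qquad \frac{\partial^2 g_k}{\partial\theta_2^2} = \frac{2k_2}{b_2}\cos(2\theta_2).
\]
At every critical point $\cos(2\theta_i) = \pm 1$, so the Hessian is nondegenerate, proving $g_k$ is Morse. The sign of $\partial^2_{\theta_1}g_k$ is positive exactly when $\theta_1\in\{-\tfrac{\pi}{2},\tfrac{\pi}{2}\}$ and negative when $\theta_1\in\{0,\pi\}$; similarly $\partial^2_{\theta_2}g_k$ is positive when $\theta_2\in\{0,\pi\}$ and negative when $\theta_2\in\{\tfrac{\pi}{2},\tfrac{3\pi}{2}\}$. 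Combining the two factors classifies each critical point as a minimum, saddle, or maximum in exactly the pattern stated in (1)--(4).

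Finally, to record the critical values, I would substitute back into $g_k$: at minima both $\cos^2\theta_1 = 0$ and $\sin^2\theta_2 = 0$ so $g_k=0$; at the $\{-\tfrac{\pi}{2},\tfrac{\pi}{2}\}\times\{\tfrac{\pi}{2},\tfrac{3\pi}{2}\}$ saddles only the $\sin^2\theta_2$ term survives; at the $\{0,\pi\}\times\{0,\pi\}$ saddles only the $\cos^2\theta_1$ term survives; and at maxima both terms are $1$, giving $\frac{k_1}{b_1}+\frac{k_2}{b_2}$. There is no hard step: the separability of $g_k$ makes the Hessian block-diagonal, so the only thing to verify carefully is that the 16 prescribed points really are the complete set of solutions of the two decoupled equations on the torus.
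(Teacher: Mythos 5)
Your computation is correct and is exactly the ``straightforward computation'' that the paper's one-line proof alludes to: separate variables, solve the two decoupled equations $\sin 2\theta_1=\sin 2\theta_2=0$, and read off the signature of the diagonal Hessian. One point deserves attention, though. Your evaluation of the critical values is right --- on $\left\{-\frac{\pi}{2},\frac{\pi}{2}\right\}\times\left\{\frac{\pi}{2},\frac{3\pi}{2}\right\}$ only the $\frac{k_2}{b_2}\sin^2\theta_2$ term survives, giving $g_k=\frac{k_2}{b_2}$, and on $\left\{0,\pi\right\}\times\left\{0,\pi\right\}$ only the $\frac{k_1}{b_1}\cos^2\theta_1$ term survives, giving $g_k=\frac{k_1}{b_1}$ --- but this is the \emph{opposite} of what items (2) and (3) of the Lemma assert. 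The Lemma as printed has the two saddle values interchanged (a typo in the paper), and your write-up silently contradicts the statement it claims to establish; you should flag the swap explicitly rather than assert that your values match the pattern stated. The discrepancy is harmless downstream, since in the arguments of Proposition \ref{prop:dim3components} only the fact that $\varepsilon_k$ lies above one saddle level and below the other is used, but a careful proof should record the corrected values.
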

\begin{proof}
    The statement follows from straightforward computations. 
\end{proof}

 We can now use   Lemmata \ref{lemma:projectionimage} and \ref{lemma:glevels} to completely characterize the surfaces $S_k$.
 \begin{prop}\label{prop:dim3components}  
The  topology of the  surfaces $S_k$ is described below: 
\begin{itemize}
   \item For $k \in D_1$, $S_k$ is isomorphic to two copies of $T^2$.
   \item For $k \in D_2$, $S_k$ is a genus $5$ surface.
   \item For $k \in D_3$ or $D_4$, $S_k$ is isomorphic to two copies of $T^2$.
   \item For $k \in D_5$, $S_k$ is isomorphic to four copies of $S^2$.
  \end{itemize}
 \end{prop}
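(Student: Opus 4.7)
The plan is to apply Lemma \ref{lemma:projectionimage}, which expresses $S_k$ as the topological double of $\bar U_k$ along $\partial U_k$, and then to determine the homeomorphism type of $\bar U_k = \{g_k \geq \varepsilon_k\}$ by reading off the Morse data of Lemma \ref{lemma:glevels}. The key observation is that the five regions $D_1,\dots,D_5$ are distinguished precisely by the position of the threshold $\varepsilon_k = \frac{k_1}{b_1} + \frac{k_2}{b_2} - 1$ among the four critical values $0$, $\frac{k_2}{b_2}$, $\frac{k_1}{b_1}$, $\frac{k_1}{b_1}+\frac{k_2}{b_2}$ of $g_k$; consequently $\bar U_k$ admits a uniform topological description on each region.

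Three of the five regions are handled immediately. In $D_1$ we have $\varepsilon_k < 0 = \min g_k$, so $\bar U_k = T^2$ with empty boundary, and the double is $T^2 \sqcup T^2$. In $D_5$ the threshold $\varepsilon_k$ sits above both saddle values but below the maximum, so $\bar U_k$ is the disjoint union of the four Morse neighborhoods of the maxima, each a closed disk, and the double is four copies of $S^2$. In $D_2$ we have $0 < \varepsilon_k < \min(k_1/b_1, k_2/b_2)$, so $\varepsilon_k$ lies below both saddle values but above the minimum; analyzing the complementary sublevel set $\{g_k \leq \varepsilon_k\}$ (which, being above only the minima, is a disjoint union of four open disks) shows that $\bar U_k$ is the torus with those four open disks removed. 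The double is therefore closed, orientable, with Euler characteristic $\chi = 2(0 - 4) = -8$, so it is the genus-$5$ surface.

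The interesting cases are $D_3$ and $D_4$. For $k \in D_3$ one has $k_2/b_2 < \varepsilon_k < k_1/b_1$, so $\bar U_k$ is built Morse-theoretically from four disks (the Morse neighborhoods of the four maxima) by attaching one-handles at the four saddles of value $k_1/b_1$, located at $\{\pm\pi/2\}\times\{\pi/2, 3\pi/2\}$. A direct Hessian computation shows that at each such saddle the ascending direction is $\theta_1$, so the saddle connects the two maxima sharing its $\theta_2$-coordinate. Consequently the two saddles at $\theta_2 = \pi/2$ both connect the pair of maxima $(0,\pi/2)$, $(\pi,\pi/2)$, and the two saddles at $\theta_2 = 3\pi/2$ both connect $(0,3\pi/2)$, $(\pi,3\pi/2)$. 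Attaching two one-handles between the same pair of disks in an orientable surface produces an annulus, so $\bar U_k$ is a disjoint union of two annuli, and doubling each annulus along its two boundary circles yields a torus. The region $D_4$ is treated identically after interchanging the roles of $\theta_1$ and $\theta_2$ (the saddles above $\varepsilon_k$ now lie at $\{0,\pi\}\times\{0,\pi\}$ with ascending direction $\theta_2$).

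The main obstacle is the handle-attachment bookkeeping in $D_3$: one must verify that the four one-handles really split into two pairs joining the same pairs of maxima, rather than chaining them into a single connected surface (which would give a very different topology, e.g.\ a genus-$1$ surface with boundary). The Hessian computation resolves this by pinning down, for each saddle, which two maxima lie at the ends of its ascending manifold.
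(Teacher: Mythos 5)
Your overall strategy is exactly the paper's: realize $S_k$ as the double of $\bar U_k$ along $\partial U_k$ (Lemma \ref{lemma:projectionimage}) and read off $\bar U_k$ from the Morse data of $g_k$ (Lemma \ref{lemma:glevels}). Where the paper simply asserts the homeomorphism type of $U_k$ in each region, you actually carry out the handle attachment and the Euler characteristic count, which is a genuine improvement in rigor — the bookkeeping of which maxima each saddle connects is precisely the point the paper glosses over.

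There is, however, a concrete slip in your $D_3$ analysis. You take the saddles of value $k_1/b_1$ to sit at $\{\pm\pi/2\}\times\{\pi/2,3\pi/2\}$, following the statement of Lemma \ref{lemma:glevels}; but evaluating $g_k(\theta_1,\theta_2)=\frac{k_1}{b_1}\cos^2\theta_1+\frac{k_2}{b_2}\sin^2\theta_2$ at $(\pi/2,\pi/2)$ gives $k_2/b_2$, not $k_1/b_1$ — the lemma as printed has the critical values of the two saddle families interchanged (the value-$k_1/b_1$ saddles are at $\{0,\pi\}\times\{0,\pi\}$). Consequently, for $k\in D_3$ the saddles lying in $\bar U_k=\{g_k\geq\varepsilon_k\}$ are those at $\{0,\pi\}\times\{0,\pi\}$, whose ascending direction is $\theta_2$, so the maxima are paired by shared $\theta_1$-coordinate rather than shared $\theta_2$-coordinate; equivalently, $\bar U_k$ consists of two annuli wrapping in the $\theta_2$-direction around $\theta_1=0$ and $\theta_1=\pi$. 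Your own Hessian computation at $(\pi/2,\pi/2)$ (ascending direction $\theta_1$) is internally inconsistent with the claimed critical value $k_1/b_1$ there, which should have flagged the mismatch. None of this changes the conclusion: the corrected $D_3$ argument is literally your $D_4$ argument with $\theta_1$ and $\theta_2$ interchanged, one still obtains two annuli, and the double is still two tori. But you should fix the saddle identification so that the handle-attachment bookkeeping — the very point you rightly single out as the crux — is carried out at the correct critical points.
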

 \begin{proof}
 The results follow from understanding the set $U_k$ using Lemma \ref{lemma:glevels}.
 For $ k \in D_1$ we have $\displaystyle{\frac{k_1}{b_1} + \frac{k_2}{b_2} < 1}$, which gives $\varepsilon_k < 0$. Since $g_k  \vargeq 0$, we see that $\partial U_k = \emptyset$, see Figure \ref{fig:D1}. Thus $S_k$ is isomorphic to two copies of $T^2$.

%
  For $k \in D_2$, we have
  \[k_1 < b_1, k_2 < b_2, \frac{k_1}{b_1} + \frac{k_2}{b_2} > 1 \Longrightarrow 0 < \varepsilon_k < \min\left\{\frac{k_1}{b_1}, \frac{k_2}{b_2}\right\}\]
  It follows by Lemma \ref{lemma:glevels} that the set $U_k$ is isomorphic to $T^2 \setminus 4D^2$, and $\partial U_k \cong 4S^1$, see Figure \ref{fig:D2}. Lemma \ref{lemma:projectionimage} implies that $S_k$ is isomorphic to two copies of $T^2$ connect sum at $4$ distinct points, i.e. a genus $5$ surface.

  For $k \in D_3$, we have
  \[k_1 > b_1, k_2 < b_2 \Longrightarrow \frac{k_2}{b_2} < \varepsilon_k < \frac{k_1}{b_1}\]
  Then Lemma \ref{lemma:glevels} implies that $U_k$ consists of two components $C_{k,1} \union C_{k,2}$, each of which is isomorphic to $S^1 \times (0,1)$, and $\partial U_k \cong 4S^1$, see Figure \ref{fig:D3}. Apply Lemma \ref{lemma:projectionimage}, we see that $S_k$ has two components as well, each of which is isomorphic to a $2$-torus.
  The argument for $k \in D_4$ is similar, where we have
  \[k_1 < b_1, k_2 > b_2 \Longrightarrow      \frac{k_1}{b_1} < \varepsilon_k < \frac{k_2}{b_2}\]
  and $ U_k $ again consists of two components, see Figure \ref{fig:D4}.
  Again, $S_k$ in this case has two components and each is isomorphic to a $2$-torus.

  For $k \in D_5$, we have
  \[k_1 > b_1, k_2 > b_2 \Longrightarrow \max\left\{\frac{k_1}{b_1}, \frac{k_2}{b_2}\right\} < \varepsilon_k < \frac{k_1}{b_1} + \frac{k_2}{b_2}\]
  Lemma \ref{lemma:glevels} implies that $U_k$ consists of $4$ components $H_{k,j}$ for $j = 1, 2, 3, 4$, each of which is isomorphic to $D^2$, the $2$-disk, and $\partial U_k \cong 4 S^1$, see Figure \ref{fig:D5}. With Lemma \ref{lemma:projectionimage}, we find that $S_k$ has four components, each of which is isomorphic to an $S^2$.
 \end{proof}

 \section{Dynamics on the level surfaces \texorpdfstring{$S_k$}{Sk}}\label{sect:dynamics3dim}
 The projection to the torus as described in the previous section also provides us with detailed information on the Suslov flow.
 
 \subsection{Linear flow on tori}\label{subsect:linearflow}
 We consider the region $D_1$ in Figure \ref{fig:bifurcations}, where the level sets are two tori.  From the proof of Proposition \ref{prop:dim3components} each component of the level set at $(k_1, k_2) \in D_1$ is diffeomorphic to the torus in $\RR^4$ with coordinates $(m_1, m_2, \gamma_1, \gamma_2)$ given by
  \[\{m_1^2 + b_1\gamma_1^2 = k_1, m_2^2 + b_2 \gamma_2^2 = k_2\}\]
 Each equation above defines an ellipse in the plane, which, as we have seen,  can be parametrized by introducing polar coordinates in each plane \eqref{eqn:parametrization}.  In these coordinates, on the level surface, the Suslov flow \eqref{eq:suslovflow}  takes the form 
  \[\dot\theta_1 = \sqrt{b_1}\gamma_3,\quad  \dot \theta_2 = \sqrt{b_2} \gamma_3\]
 Thus the Suslov flow projects to a linear flow with slope $\displaystyle{\sqrt{ \frac{ b_2 } { b_1 } } } $ on the torus, which is periodic when the ratio is a rational number. On the square flat torus $ T ^2 $, the projected flow is given by pieces of straight lines with slope $\displaystyle{\sqrt{ \frac{ b_2 } { b_1 } } }$,  see figure \ref{fig:Z1}.

 \begin{figure}[h!]
    \centering
    \begin{subfigure}[b]{0.45\textwidth}
        \includegraphics[width=\textwidth]{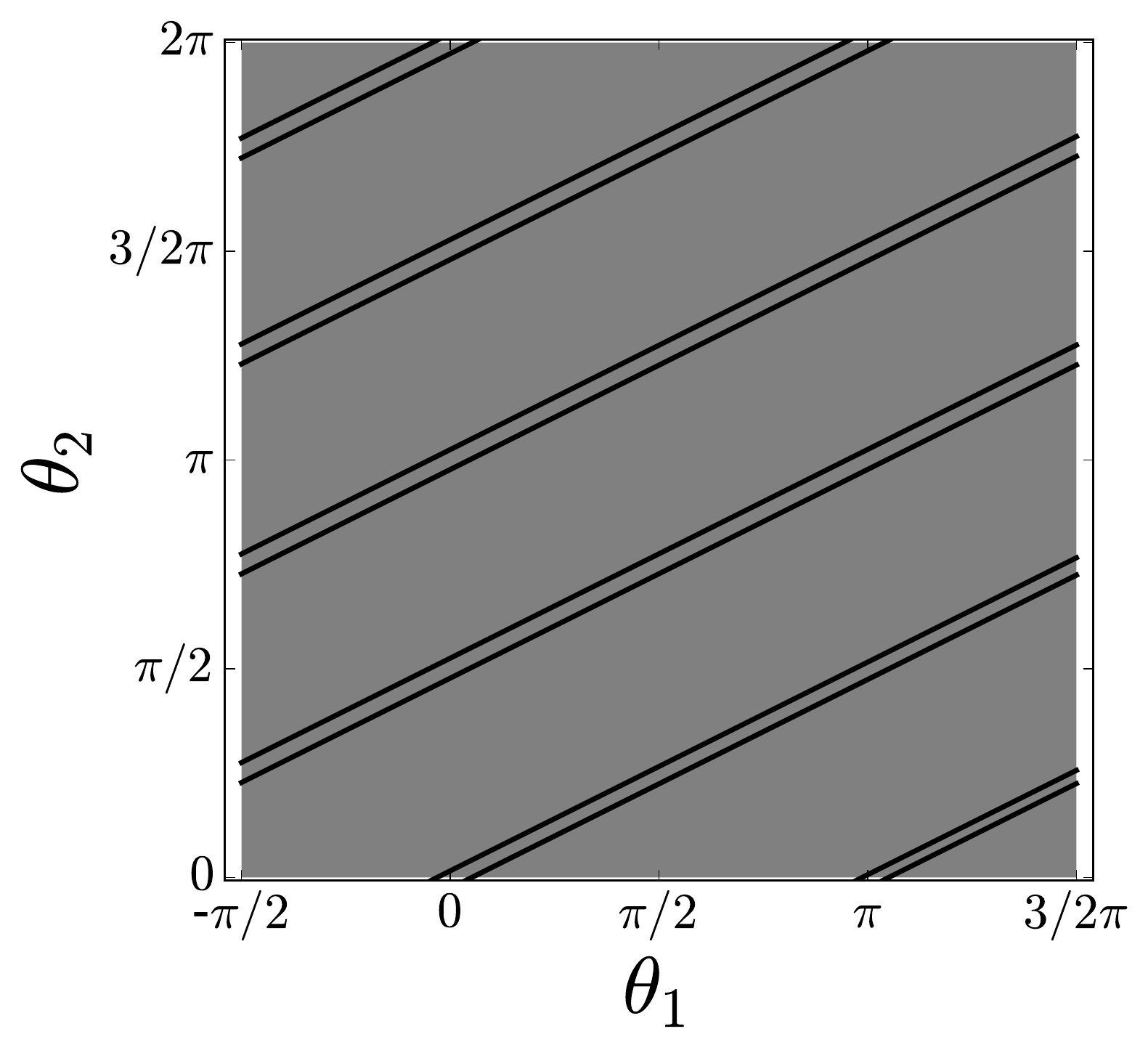}
        \caption{ $ b_1 = 4 $ , $ b_2 = 1 $, $ k_1 =1 $ , and $ k_2 =0.5 $ }
        \label{fig:Z1}
    \end{subfigure}\qquad\qquad
    ~ 
    \begin{subfigure}[b]{0.45\textwidth}
        \includegraphics[width=\textwidth]{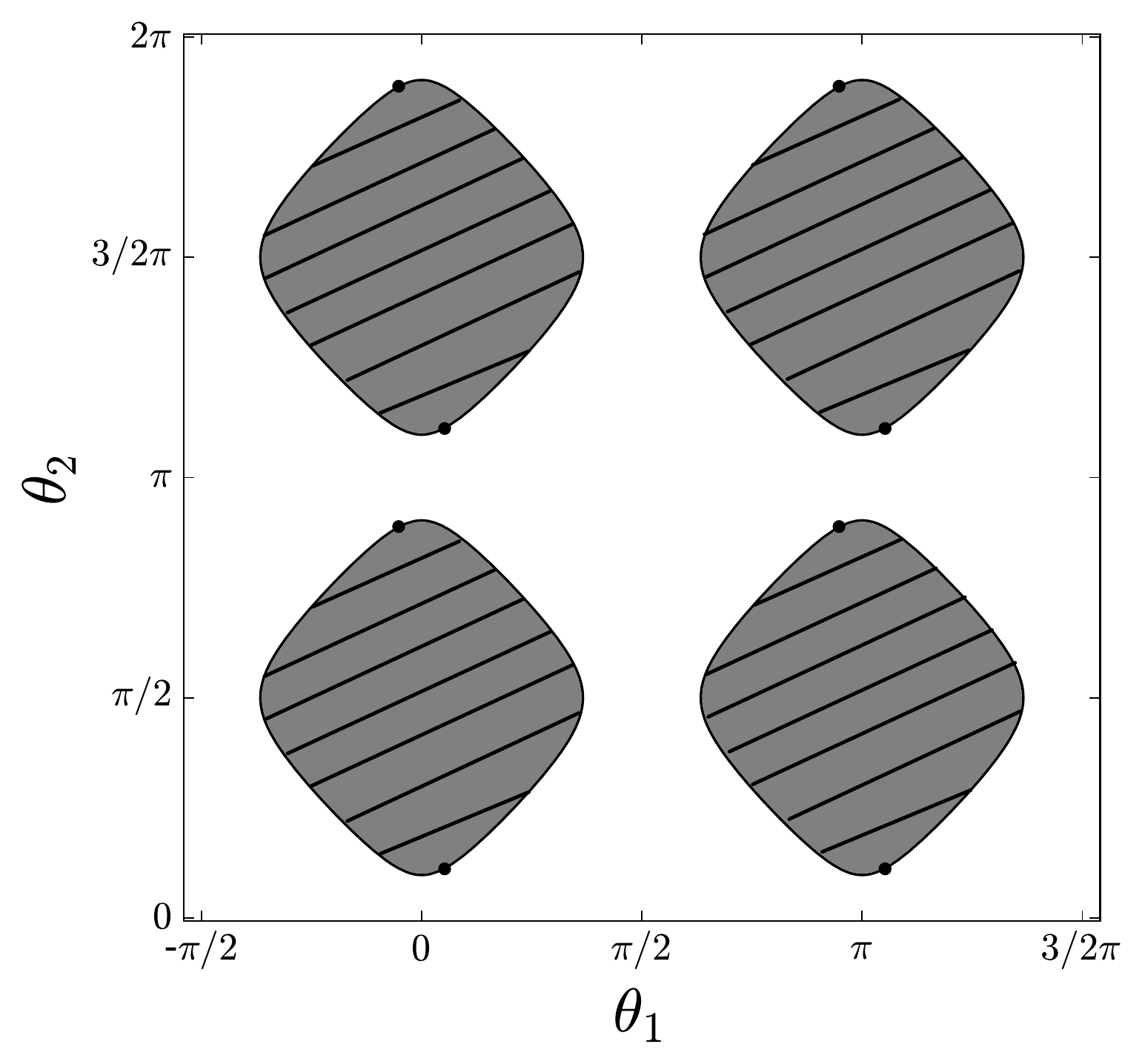}
        \caption{ $ b_1 = 4 $ , $ b_2 = 1 $, $ k_1 =4.4 $ , and $ k_2 =1.1 $   }
        \label{fig:Z5}
    \end{subfigure}

    \caption{(a) The flow for $ k \in  D_1 $. In this case have rational slope $ \sqrt{ b_2 /b_1 } = 1/2 $, and all the orbits are periodic. The picture shows four periodic orbits.  (b)  The flow for $ k \in D_5 $. There are 8 critical points. All the other orbits are periodic. }
\end{figure}

 \subsection{Additional integral of motion} \label{subsect:newintegralofmotion}
  When $\displaystyle{\sqrt{\frac{b_2}{b_1}}} \in \QQ$ the system admits another integral of motion, which implies as well that Suslov flow on $S_k$ is periodic for generic $k$ in this case. We describe it first for $b_1 = b_2 = b$.
 \begin{prop}\label{prop:newintegral}
  When $b_1 = b_2 = b$, the Suslov flow has the following as an integral of motion:
  \[f_3 = m_1m_2 - b\gamma_1 \gamma_2 \]
 \end{prop}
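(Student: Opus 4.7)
The plan is to verify directly that $\dot f_3 = 0$ along solutions of the system \eqref{eq:suslovflow} when $b_1 = b_2 = b$. Since $f_3$ is polynomial in the phase-space coordinates, this is a matter of computing the Lie derivative $X(f_3)$ and checking that the terms cancel; no auxiliary structure is needed.

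Concretely, I would differentiate $f_3 = m_1 m_2 - b\gamma_1\gamma_2$ with respect to $t$, apply the product rule, and substitute the right-hand sides of \eqref{eq:suslovflow}. The $m$-derivative piece becomes
\[
\dot m_1 m_2 + m_1 \dot m_2 = -b\gamma_1\gamma_3 \, m_2 + m_1 \, b\gamma_2\gamma_3 = b\gamma_3(m_1\gamma_2 - m_2\gamma_1),
\]
while the $\gamma$-derivative piece becomes
\[
b(\dot\gamma_1\gamma_2 + \gamma_1\dot\gamma_2) = b(m_1\gamma_3\gamma_2 - \gamma_1 m_2\gamma_3) = b\gamma_3(m_1\gamma_2 - m_2\gamma_1).
\]
Subtracting gives $\dot f_3 = 0$, which is exactly the assertion.

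There is no real obstacle here: the computation works precisely because the common coefficient $b$ makes the two factors of $\gamma_3(m_1\gamma_2 - m_2\gamma_1)$ appear with equal magnitude but opposite sign, so they cancel. The key observation worth highlighting in the write-up is where the hypothesis $b_1 = b_2$ is used: if $b_1 \ne b_2$, then the $m$-derivative piece would read $\gamma_3(b_1 m_1\gamma_2 \cdot (b_2/b_1) - \cdots)$ and no constant multiple of $\gamma_1\gamma_2$ would produce a cancelling term, so the same ansatz fails. This motivates the more general rational-slope case treated afterwards, where one expects higher-degree polynomial integrals.
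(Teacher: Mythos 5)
Your computation is correct and is exactly the ``straightforward verification by taking the derivative with respect to $t$'' that the paper itself invokes, just written out in full: both pieces reduce to $b\gamma_3(m_1\gamma_2 - m_2\gamma_1)$ and cancel. No difference in approach, so nothing further to compare.
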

 \begin{proof}
  Straightforward verification by taking derivative with respect to $t$.
\end{proof} 
 It's readily verified that the level sets of $f_3$ define the periodic flow on the tori for $k \in D_1$ when $b_1 = b_2$.
 In general, the new integral of motion is a higher degree polynomial in $m_i$'s and $\gamma_i$'s.
  \begin{prop}\label{prop:rationalratiointegral}
  Suppose that the ratio $\sqrt{b_1} : \sqrt{b_2}$ is rational, then there is an integral of motion $f_3$, given by a polynomial of $(\gamma_1, \gamma_2, m_1, m_2)$.
 \end{prop}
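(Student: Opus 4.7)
The plan is to introduce complex coordinates that diagonalize the flow on each ellipse factor. Set
\[z_1 := m_1 + i\sqrt{b_1}\,\gamma_1, \qquad z_2 := m_2 + i\sqrt{b_2}\,\gamma_2,\]
so that $|z_1|^2 = f_1$ and $|z_2|^2 = f_2$ recover the two known integrals. Substituting the Suslov equations \eqref{eq:suslovflow} directly into the derivatives of $z_1, z_2$ yields
\[\dot z_1 = -b_1\gamma_1\gamma_3 + i\sqrt{b_1}\,m_1\gamma_3 = i\sqrt{b_1}\,\gamma_3\,z_1, \qquad \dot z_2 = b_2\gamma_2\gamma_3 - i\sqrt{b_2}\,m_2\gamma_3 = -i\sqrt{b_2}\,\gamma_3\,z_2.\]
Thus each of $z_1, z_2$ evolves purely by a (time-dependent) phase rotation, which is the coordinate-level reason that the motion on $T^2_k$ descends to the linear flow described in Section \ref{subsect:linearflow}.

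Next, write the rational ratio as $\sqrt{b_1}:\sqrt{b_2} = q:p$ with $p, q$ coprime positive integers, so that $p\sqrt{b_1} = q\sqrt{b_2}$. The Leibniz rule together with the evolution equations above gives
\[\frac{d}{dt}\bigl(z_1^p z_2^q\bigr) = \bigl(ip\sqrt{b_1}\gamma_3 - iq\sqrt{b_2}\gamma_3\bigr) z_1^p z_2^q = i\gamma_3\bigl(p\sqrt{b_1} - q\sqrt{b_2}\bigr) z_1^p z_2^q = 0,\]
so $z_1^p z_2^q$ is a complex-valued conserved quantity of the Suslov flow. Since the imaginary unit enters $z_1, z_2$ only through the real scalars $\sqrt{b_1}, \sqrt{b_2}$, the binomial expansion of $z_1^p z_2^q$ has real part and imaginary part that are both real polynomials in $(m_1, m_2, \gamma_1, \gamma_2)$ with coefficients in $\RR$. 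Define
\[f_3 := \mathrm{Re}\bigl(z_1^p z_2^q\bigr),\]
which is the desired polynomial integral of motion.

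As a consistency check, when $b_1 = b_2 = b$ one has $p = q = 1$ and $z_1 z_2 = (m_1 m_2 - b\gamma_1\gamma_2) + i\sqrt{b}(m_1\gamma_2 + m_2\gamma_1)$, so $f_3$ recovers $m_1 m_2 - b\gamma_1\gamma_2$ of Proposition \ref{prop:newintegral}. There is no serious obstacle in this argument: the work is all in choosing the right variables. The only mild subtlety is verifying that $f_3$ is genuinely new and not a function of $f_1, f_2$ alone, but this is clear since on a component of $S_k$ the moduli $|z_1|, |z_2|$ are constant while the phase $p\arg z_1 + q\arg z_2$ varies with the orbit; the latter is precisely what $f_3$ records.
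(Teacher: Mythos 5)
Your proof is correct and is essentially the same as the paper's: the paper's $f_3 = \Re\bigl((m_1 + i\sqrt{b_1}\gamma_1)^q(\sqrt{b_2}\gamma_2 - im_2)^p\bigr)$ is, up to a constant factor of $-i$ in the second factor and the reciprocal labelling of $p,q$, exactly your $\Re(z_1^p z_2^q)$. The only cosmetic difference is that you verify conservation by differentiating $z_1, z_2$ directly, whereas the paper first passes through the angular coordinates $(\theta_1,\theta_2)$ on $T^2_k$ and the conserved combination $q\theta_1 - p\theta_2$ before invoking the same direct differentiation.
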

 \begin{proof} 
 For a given $k \in D_1$, rewrite the flow equations \eqref{eq:suslovflow} in the $(\theta_1, \theta_2)$-coordinates:
  \[ \left\{
  \begin{matrix}
   m_1  = & \sqrt{k_1}\cos \theta_1\\
   \gamma_1  = & \sqrt{\frac{k_1}{b_1}} \sin \theta_1
  \end{matrix}
  \right. \text{ and }
  \left\{
  \begin{matrix}
   m_2 = & \sqrt{k_2}\sin \theta_2\\
   \gamma_2  = & \sqrt{\frac{k_2}{b_2}} \cos \theta_2
  \end{matrix}
  \right.
  \]
  where $ - \frac{ \pi } { 2 } \leq \theta_1 \leq \frac{ 3 \pi } { 2 } $ and $ 0 \leq \theta_2 \leq 2 \pi $, and we have
   \[\dot\theta_1 = \sqrt{b_1}\gamma_3, \dot \theta_2 = \sqrt{b_2} \gamma_3\]
  Let $p, q \in \ZZ$ be integers such that
   \[\sqrt{\frac{b_1}{b_2}} = \frac{p}{q}\]
  then we see that $q\theta_1 - p\theta_2$ is a constant along the flow
   \[q \dot \theta_1 - p \dot \theta_2 = (q\sqrt{b_1} - p\sqrt{b_2})\gamma_3 = 0\]
  Furthermore, we can express trigonometric functions of $q\theta_1 - p\theta_2$ as a degree $p+q$ polynomial in $m_1, m_2, \gamma_1, \gamma_2$, involving also $k_1$ and $k_2$. For example, let $z_1 = e^{i\theta_1}$ and $z_2 = e^{i\theta_2}$, then
  \[\cos(q\theta_1 - p\theta_2) = \Re(z_1^q z_2^{-p}) = \Re((\cos\theta_1 + i \sin\theta_1)^q(\cos\theta_2 - i\sin\theta_2)^{p})\]
  It follows that 
  \[f_3 = \Re\left(\left(m_1 + i\sqrt{b_1}\gamma_1\right)^q\left(\sqrt{b_2}\gamma_2 - i m_2\right)^p\right) = k_1^{\frac{q}{2}}k_2^{\frac{p}{2}} \cos(q\theta_1 - p\theta_2)\]
  is a constant along the flows for $k \in D_1$. It is straightforward to verify by direct differentiation that $\left(m_1 + i\sqrt{b_1}\gamma_1\right)^q\left(\sqrt{b_2}\gamma_2 - i m_2\right)^p$ is a constant along the flow independent of $k$. Thus $f_3$ is an integral of motion, which is a degree $p+q$ real polynomial in $(\gamma_1, \gamma_2, m_1, m_2)$.
 \end{proof} 
 \subsection{Critical Points}
Critical points of the flow of the Suslov problem can be obtained by a simple geometric argument. We observe that the critical points are precisely where the level sets $  \partial U_k $ are tangent to the linear flow. Thus, in $(\theta_1, \theta_2)$ coordinates, the critical points are exactly the solutions to the following system of equations:
\begin{align*}
    & \frac{ k_1 } { b_1 } \sin ^2 \theta_1 + \frac{ k_2 } { b_2 } \cos  ^2 \theta_2=1\\
    &  \frac{ k_1 } { b_1 } \frac{ b_2 } { k_2 } \frac{ \sin \theta_1 \cos \theta_1 } { \sin \theta_2 \cos \theta_2 } = \sqrt{ \frac{ b_2 } { b_1 } }
\end{align*}
The second equation simplifies to
\[\frac{k_1^2}{b_1} \sin^2\theta_1 (1-\sin^2\theta_1) = \frac{k_2^2}{b_2}(1-\cos^2\theta_2)\cos^2\theta_2\]
Using the first equation and the fact that $\gamma_1^2 = \displaystyle{\frac{k_1}{b_1}\sin^2\theta_1}$, we obtain the following quadratic equation in $\gamma_1^2$, which can be explicitly solved:
\begin{equation}\label{eqn:quadratic_gamma1} 
    (b_1 - b_2) \gamma_1 ^4 - (k_1 + k_2 - 2 b_2) \gamma_1 ^2 + (k_2 - b_2) = 0.    
\end{equation}

When $ b_1 = b_2 : = b $, \eqref{eqn:quadratic_gamma1} reduces to 
\[(k_1 + k_2 - 2 b ) \gamma_1 ^2 + (k_2 - b ) = 0 \Longrightarrow \gamma_1 = \pm \gamma_1^*, \text{ where } \gamma_1 ^\ast =  \sqrt{ \frac{ k_2 - b } { k_1 + k_2 - 2b } } \]
Let 
\[\gamma_2 ^\ast =  \sqrt{ \frac{ k_1 - b } { k_1 + k_2 - 2b } } \text{ and } t = \sqrt{ k_1 + k_2 - b }\]
then the critical points in this case are given in Table \ref{table:1} below:
 \begin{table}[h!]
    \footnotesize{\begin{tabular}{|l|l|c|l|}
  \hline 
   Region & \multicolumn{1}{c|}{Value of Parameters}&  \# of cp &  critical points: $ (m_1 , m_2 , \gamma_1 , \gamma_2 , \gamma_3 ) $   \\

  \hline\hline 
  $ D_1$         &  $ k_1 + k_2 < b $ & 0 &  \\
$ D_2 $  &    $ k_1 + k_2 >b,\, k_1 <b, \, k_2<b$   & 8 &   $ \pm (\pm t \gamma_2^\ast ,\pm t \gamma_1^\ast,\pm\gamma_1 ^\ast , \pm \gamma_2 ^\ast ,0) $,   \\
 &  &&         $ \pm (\pm t \gamma_2^\ast ,\mp t \gamma_1^\ast,\mp\gamma_1 ^\ast , \pm \gamma_2 ^\ast ,0) $     \\
  $ D_3$        & $  k_1> b, k_2 < b   $   &  0 &  \\
   $ D_4$      &$  k_1< b, k_2 > b   $   & 0 &  \\
 $ D_5$  &    $  k_1 > b, k_2 > b $    & 8  &  $  \pm (\pm t \gamma_2^\ast , \pm t  \gamma_1^\ast,\pm \gamma_1 ^\ast , \pm \gamma_2 ^\ast ,0) $,\\ 
 &     &     &   $ \pm (\pm t \gamma_2^\ast ,\mp t \gamma_1^\ast,\mp\gamma_1 ^\ast , \pm \gamma_2 ^\ast ,0) $   \\
\hline  
\end{tabular}}\caption{Critical points for $ b_1 =  b_2 $. See Figure \ref{fig:bifurcations}  for a definition of the regions $ D_i $.} \label{table:1}
\end{table}

Suppose that $b_1 \neq b_2$. As a quadratic equation in $\gamma_1^2$, the discriminant of \eqref{eqn:quadratic_gamma1} is
\begin{equation}\label{eq:discriminant}
    \Delta = (k_1 + k_2 - 2 b_2) ^2 - 4 (b_1 - b_2) (k_2 - b_2)
\end{equation}
The solutions of equation \eqref{eqn:quadratic_gamma1} are $\gamma_1 = \pm \Gamma_1 ^{ - }, \, \pm \Gamma_1 ^{ + }$ with
\begin{equation}\label{eqn:solutions_quartic} 
     \Gamma_1 ^{ - } =  \sqrt{ \frac{ (k_1 + k_2 - 2 b_2) - \sqrt{ \Delta }}{2 (b_1 - b_2 )  }  }, \, \Gamma_1 ^{ + }=  \sqrt{ \frac{ (k_1 + k_2 - 2 b_2) + \sqrt{ \Delta }}{2 (b_1 - b_2 )  }  }  
\end{equation} 
which can be real or complex depending on the value of the parameters. 
 \begin{figure}[!ht]
   \begin{minipage}{0.4\textwidth}
    \begin{tikzpicture}[scale=2.5]
     \draw[very thick,->] (-0.25,0) -- (2.2,0) node[right] {$k_1$};
     \draw[very thick,->] (0,-0.25) -- (0,2.2) node[above] {$k_2$};

     \fill[yellow,opacity=0.6] (1,0.5) -- (2,0.5) -- (2,2) -- (1,2)--cycle;
     \fill[red,opacity=0.6] (1,0) -- (1,0.5) -- (0,0.5) --cycle;
     \fill[blue!60!cyan,opacity=0.3] (1,0) -- (2,0) -- (2,0.5) -- (1,0.5)--cycle;
     \fill[blue!60!cyan,opacity=0.3] (0,0.5) -- (0,2) -- (1,2) -- (1,0.5)--cycle;
     \fill[blue!60!cyan,opacity=0.3] (0,0) -- (0,0.5) -- (1,0) --cycle;

     \draw[thick] (1,0) -- (0,0.5);
     \draw[thick] (1,0) --  (1,2);
     \draw[thick] (0,0.5) --  (2,0.5);
     \node at (1.5,1.2) {\small$D_5$ };
     \node at (0.7,0.35) {\small$D_2$ };
     \node at (0.3,0.15) {\small$D_1$ };
     \node at (-0.2,0.6) {\small$D_4^1 $ };
     \draw[->] (-0.1,0.57) --  (0.1,0.53); 
     \node at (1.5,0.3) {\small$D_3$ };
     \node at (0.5,1.2) {\small$D_4^4 $ };
     \node at (0.5,0.9) {\small$\Delta <0 $ };
     \node at (0.9,1.85) {\small$D_4^3 $ }; 
     \node at (0.93,0.57) {\tiny$D_4^2 $ };
     \node at (0.2,1.9) {\small$L $ };
     \node at (1,-0.15) {$ b_1   $ };
     \filldraw (0.5,0.5) circle (0.7pt);
     \filldraw (1,1) circle (0.7pt);
     \draw[thick, scale=0.5,domain=1.46:2,smooth,variable=\x,black,samples=500] plot ({\x},{-\x+2*sqrt(-\x+2)+4});
     \draw[thick,scale=0.5,domain=0:2,smooth,variable=\x,black,samples=500] plot ({\x},{-\x-2*sqrt(-\x+2)+4});
      \draw[dashed,scale=0.5,domain=0:4,smooth,variable=\x,black,samples=500] plot ({\x},{4-\x});
    \end{tikzpicture}
   \end{minipage}
  \begin{minipage}{0.55\textwidth}
    \caption{\label{fig:quartic} 
    The curve $ \Delta = 0 $ divides the region $ D_4 $ into four subregions, namely, $ D_4 ^1 $, $ D_4 ^2 $, $ D_4 ^3 $, and $ D_4 ^4 $, where $D_4^4$ is the subregion where $ \Delta <0 $. The equation for the dashed line $ L $ is $ k_1 + k_2 = 2 b_1 $.  We denote with $ C_1 $ the part of $ \Delta = 0 $ below the line $ L $ and with $ C_2 $ the part above $ L $. }
  \end{minipage}
 \end{figure}
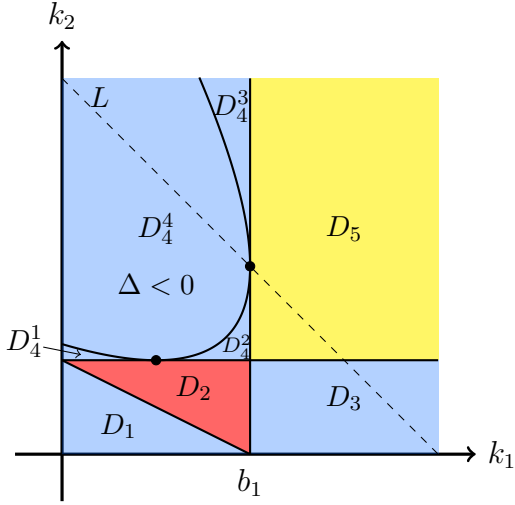
 
\noindent
It's straightforward to see, from \eqref{eqn:solutions_quartic}, that there are no critical points in $D_4^4$.
Let
\[\Gamma_2^\pm = \sqrt{1-\left(\Gamma_1^\pm\right)^2}, \Lambda_1^\pm = \sqrt{k_1 - b_1 (\Gamma_1 ^{ \pm })  ^2} \text{ and } \Lambda_2  ^{\pm} = \sqrt{k_2 - b_2 (\Gamma_2 ^{\pm})  ^2  }\]
%
%
%
%
We can then describe the critical points for $ b_1 > b_2 $ in Table \ref{table:2} below, where the regions are as labeled in Figure \ref{fig:quartic}.

\begin{table}[h!]
    \footnotesize{\begin{tabular}{|l|l|l|l|c|l|}
  \hline 
   Region & \multicolumn{3}{c|}{Value of Parameters}&  \#  &  critical points: $ (m_1 , m_2 , \gamma_1 , \gamma_2 , \gamma_3 ) $   \\

  \hline\hline 
 $ D_1 $  &   &   &  $ k_1 > b_1 $  &  0    &   \\
 $ D_2  $ &  $ k_2 < b_2 $  &   $  \Delta >0 $ & $ k_1 / b_1 + k_2 / b_2 >1$, $ k_1 < b_1 $    &8  &$ \left( (- 1) ^{ k } \Lambda_1 ^{ + }, (- 1) ^{ l } \Lambda_2 ^{ +} , (- 1) ^{ i } \Gamma_1 ^{ + } , (- 1) ^{ j } \Gamma_2 ^{ + },  0\right) $  \\
$ D_3  $ &  &  & $ k_1 / b_1 + k_2 / b_2 <1$    &  0 &    \\
  \hline 
$ D_4 ^3 $   &  &   & $ k_1 < b_1 $,\, $k_1 + k_2 > 2 b_1 $     & 0 &     \\
 $ D_5 $   &   &  $  \Delta >0 $ & $ k_1 > b_1 $    & 8  & $ \left( (- 1) ^{ k } \Lambda_1 ^{ - }, (- 1) ^{ l } \Lambda_2 ^{ - }, (- 1) ^{ i } \Gamma_1 ^{ - } , (- 1) ^{ j } \Gamma_2 ^{ - },  0 \right) $     \\
 $ D_4 ^{ 1 } \cup D_4 ^2 $  &  $ k_2 > b_2 $   &   &  $ k_1 < b_1 $,\, $k_1 + k_2 < 2 b_1 $   & 16  &  $  \left((- 1) ^{ k } \Lambda_1 ^{ \pm }, (- 1) ^{ l } \Lambda_2 ^{ \pm } , (- 1) ^{ i } \Gamma_1 ^{ \pm } , (- 1) ^{ j } \Gamma_2 ^{ \pm },   0\right)$ \\ 
  \cline{3-6} 
$ C_1 $    &   & $ \Delta =0$  & $  k_1 + k_2 < 2 b_1 $  & 8 & $ \left(    (- 1) ^{ k } \Lambda_1 , (- 1) ^{ l } \Lambda_2 ,(- 1) ^{ i } \Gamma_1  , (- 1) ^{ j } \Gamma_2 ,0\right) $   \\
 $ C_2 $     &   & $ \Delta = 0 $ & $ k_1 + k_2 > 2b_1 $  &  0 &  \\
 \cline{3-6}
 $D_4 ^4 $  &  &  $ \Delta <0  $  &  &   $ 0 $   &    \\
 \hline
\end{tabular}}\caption{Critical points for $ b_1 >  b_2 $. Here $ i,j,k \in \{ 0,1\}$ and $ l = i + k - j $.  }\label{table:2}
\end{table}

\subsection{Classification  of Critical points}
Given the explicit computation of all the critical points on the smooth level surfaces, we can now classify all of them.
Recall that the level surface $S_k$ is defined by \eqref{eq:suslovsurface}. The tangent plane at $p \in S_k$ is the kernel of the matrix
\eqref{eqn:gradients} formed by the gradients of the defining equations. Let $p = (m_1, m_2, \gamma_1, \gamma_2, \gamma_3)$ be a critical point of the flow, then $\gamma_3 = 0$ and none of the other coordinates vanishes. Thus, near a critical point $p$ we have a local frame of the tangent space $TS_k$ given by
\[ v_1 = \frac{b_1\gamma_3}{m_1}\frac{\partial}{\partial m_1} -\frac{\gamma_3}{\gamma_1}\frac{\partial}{\partial \gamma_1} + \frac{\partial}{\partial \gamma_3} \text{ and }
   v_2 = \frac{b_1\gamma_2}{m_1}\frac{\partial}{\partial m_1} - \frac{b_2\gamma_2}{m_2} \frac{\partial}{\partial m_2} - \frac{\gamma_2}{\gamma_1}\frac{\partial}{\partial \gamma_1} + \frac{\partial}{\partial \gamma_2} 
\]
We see that integral curves of $v_1$ are given by $\gamma_2 \equiv \const$ and the integral curves of $v_2$ are given by $\gamma_3 \equiv \const$. In particular, $\{\gamma_2, \gamma_3\}$ defines a local coordinate chart around $p$. By an abuse of notation, we may write
\[v_1 = \partial_{\gamma_3} \text{ and } v_2 = \partial_{\gamma_2}\]
Then the Suslov vector field on $S_k$ near $p$ can be written as
\[X = (\gamma_2m_2 - \gamma_1m_1)\partial_{\gamma_3} - m_2\gamma_3 \partial_{\gamma_2}\]
Let $P$ be a critical point of $X$ and suppose that $\gamma_2(P) = c$ and $\gamma_3(P) = 0$. From the equations \eqref{eq:suslovsurface}, we compute that the linearization of $X$ at $P$ to be
\begin{equation}\label{eq:linearize}
 \left(m_2(P) - \frac{b_2\gamma_2(P)^2}{m_2(P)} + \frac{m_1(P) \gamma_2(P)}{\gamma_1(P)} - \frac{b_1\gamma_1(P)\gamma_2(P)}{m_1(P)}\right) (\gamma_2 - c) \partial_{\gamma_3} - m_2(P) \gamma_3 \partial_{\gamma_2}
\end{equation}
which gives the Jacobian of $X$ at $P$:
\[J_X(P) = \begin{pmatrix}
   0 & -m_2(P) \\  m_2(P) - \frac{b_2\gamma_2(P)^2}{m_2(P)} + \frac{m_1(P) \gamma_2(P)}{\gamma_1(P)} - \frac{b_1\gamma_1(P)\gamma_2(P)}{m_1(P)} & 0
  \end{pmatrix}
\]
The characteristic polynomial of $J_X(P)$ is
\[\lambda^2 - m_2(P)\left(m_2(P) - \frac{b_2\gamma_2(P)^2}{m_2(P)} + \frac{m_1(P) \gamma_2(P)}{\gamma_1(P)} - \frac{b_1\gamma_1(P)\gamma_2(P)}{m_1(P)}\right)\]
which simplifies to
\begin{equation}\label{eq:character}
 \lambda^2 + k_1 + k_2 - 2(m_1(P)^2 + m_2(P)^2) 
\end{equation}
since at $P$, we have $\gamma_2(P) m_2(P) - \gamma_1(P) m_1(P) = 0$. The type of the singularity is determined by the roots of the characteristic polynomial in \eqref{eq:character}.

First consider the case where $b_1 = b_2 = b$. In this case, the flow has $8$ critical points on the level set $S_k$ when $(k_1, k_2) \in D_2 \union D_5$, and no critical points in other regions.
\begin{prop}\label{prop:equalbclassify}
 When $b_1 = b_2 = b$, the critical points on $S_k$ are all saddles if $k \in D_2$, and are all centers if $k \in D_5$.
\end{prop}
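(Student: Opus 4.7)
The plan is to combine the explicit parametrization of critical points in Table \ref{table:1} (case $b_1 = b_2 = b$) with the characteristic polynomial \eqref{eq:character}. Because that polynomial depends on $P$ only through the single scalar $m_1(P)^2 + m_2(P)^2$, the whole classification of the eight critical points on $S_k$ reduces to computing this one quantity and reading off the sign of $\lambda^2$.

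The key observation is the identity
\[(\gamma_1^\ast)^2 + (\gamma_2^\ast)^2 \;=\; \frac{(k_2 - b) + (k_1 - b)}{k_1 + k_2 - 2b} \;=\; 1.\]
Every critical point in Table \ref{table:1} therefore satisfies $m_1^2 + m_2^2 = t^2\bigl((\gamma_1^\ast)^2 + (\gamma_2^\ast)^2\bigr) = t^2 = k_1 + k_2 - b$, independently of which of the eight sign choices for the coordinates of $P$ one makes. Substituting this into \eqref{eq:character} collapses the characteristic polynomial to a quadratic whose roots satisfy $\lambda^2 = 2b - (k_1 + k_2)$ at every critical point on $S_k$. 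Thus the type of $P$ is controlled only by the sign of $2b - (k_1 + k_2)$, which is uniform across the region in which $k$ lies.

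A region-by-region comparison finishes the argument. In $D_2$ the hypotheses $k_1 < b$ and $k_2 < b$ give $k_1 + k_2 < 2b$, so $\lambda^2 > 0$: the eigenvalues are real of opposite signs and each critical point is a saddle. In $D_5$ the hypotheses $k_1 > b$ and $k_2 > b$ give $k_1 + k_2 > 2b$, so $\lambda^2 < 0$: the eigenvalues form a conjugate pair of nonzero imaginary numbers and each critical point is a center. As a consistency check, Poincar\'e--Hopf agrees with both counts: Proposition \ref{prop:dim3components} tells us that $S_k$ in $D_2$ is a genus-$5$ surface with $\chi = -8$, matching $8$ saddles of index $-1$, while $S_k$ in $D_5$ is a disjoint union of four $2$-spheres with total $\chi = +8$, matching $8$ centers of index $+1$. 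The only point that requires genuine care is the sign bookkeeping in \eqref{eq:character} after substitution; once that is handled, the proof is a direct substitution with no further geometric obstacle.
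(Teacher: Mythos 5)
Your proof is correct and follows essentially the same route as the paper: both compute $m_1(P)^2+m_2(P)^2=k_1+k_2-b$ from Table \ref{table:1}, substitute into the characteristic polynomial to get $\lambda^2 = 2b-(k_1+k_2)$, and read off saddle versus center from the sign of $k_1+k_2-2b$ in $D_2$ versus $D_5$. Your added justification of the identity $(\gamma_1^\ast)^2+(\gamma_2^\ast)^2=1$ and the Poincar\'e--Hopf consistency check are welcome extras (and your care about the sign in \eqref{eq:character} is warranted, since a literal substitution into that displayed formula yields the opposite sign from the roots the paper states), but the argument is the same.
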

\begin{proof}
In this case, the explicit coordinates for the singular points in Table \ref{table:1} lead to
 \[m_1(P)^2 + m_2(P)^2 = k_1+k_2-b\]
 which implies that the roots of the characteristic polynomial is given by
 \[\pm\sqrt{2b - (k_1 + k_2)}\]
 The statement follows noticing that $k_1 + k_2 < 2b$ in $D_2$, while $k_1 + k_2 > 2b$ in $D_5$.
\end{proof}

Next, consider $b_1 \neq b_2$. Without loss of generality, we suppose that $b_1 > b_2$.

\begin{proposition}\label{prop:crit_points}
Suppose that $b_1 > b_2$. When $S_k$ is a smooth $2$-manifold, we have:
 \begin{itemize}
   \item If $ (k_1 , k_2) \in D_2 $ then the $ 8 $ critical points are all saddles.
   \item If $ (k_1 , k_2) \in D_4 ^{ 1 } \cup D_4 ^{ 2 }  $ then there are $ 8 $ centers and $ 8 $ saddles.
   \item If $ (k_1 , k_2) \in D_5 $ then the $ 8 $ critical points are all centers.
   \item If $ (k_1 , k_2) \in C_1  $ then there are  $ 8 $ non-hyperbolic critical points.
 \end{itemize}   
\end{proposition}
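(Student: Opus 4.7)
The plan is to apply the characteristic polynomial of $J_X(P)$ derived in \eqref{eq:character} to the explicit critical points catalogued in Table \ref{table:2} and to show that it collapses to the remarkably clean expression $\lambda^2 = \pm\sqrt{\Delta}$, with the sign inherited from which branch $\Gamma_1^\pm$ of \eqref{eqn:solutions_quartic} produced the critical point. Classification of each critical point is then reduced to a sign check that can be read off region by region from Figure \ref{fig:quartic} and Table \ref{table:2}.

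At any critical point $P \in S_k$ we have $\gamma_3(P) = 0$, so $\gamma_2(P)^2 = 1 - \gamma_1(P)^2$. The defining equations of $S_k$ then give
\[m_1(P)^2 + m_2(P)^2 = \bigl(k_1 - b_1 \gamma_1(P)^2\bigr) + \bigl(k_2 - b_2(1 - \gamma_1(P)^2)\bigr) = (k_1 + k_2 - b_2) - (b_1 - b_2)\gamma_1(P)^2.\]
Substituting this into \eqref{eq:character} and using the identity $2(b_1 - b_2)(\Gamma_1^\pm)^2 = (k_1 + k_2 - 2b_2) \pm \sqrt{\Delta}$ coming from \eqref{eqn:solutions_quartic}, the two constant terms cancel and the characteristic equation collapses to $\lambda^2 = \pm \sqrt{\Delta}$. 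Consequently, whenever $\Delta > 0$, the $\Gamma_1^+$ branch contributes real eigenvalues of opposite sign (a saddle) while the $\Gamma_1^-$ branch contributes a pair of purely imaginary eigenvalues (a center); on the locus $\Delta = 0$ the two branches coincide, both eigenvalues vanish, and the critical point is non-hyperbolic.

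Reading this dichotomy against Table \ref{table:2} gives all four items of the proposition: in $D_2$ only the branch $\Gamma_1^+$ produces critical points, accounting for the $8$ saddles; in $D_5$ only $\Gamma_1^-$ appears, giving $8$ centers; in $D_4^1 \cup D_4^2$ both branches are realized, yielding $8$ saddles together with $8$ centers; and on the arc $C_1$ the discriminant vanishes, so $\lambda = 0$ at each of the $8$ critical points. The main obstacle I anticipate is keeping signs straight when matching the $\pm$ of the branch with the sign of $\lambda^2$, since both the characteristic polynomial and the quadratic formula carry signs. As a sanity check I would confirm agreement with Proposition \ref{prop:equalbclassify} in the degenerate limit $b_1 \to b_2$, where $\Delta \to (k_1 + k_2 - 2b)^2$ and $\sqrt{\Delta} \to |k_1 + k_2 - 2b|$ correctly reproduces the saddle/center split between $D_2$ and $D_5$ found there.
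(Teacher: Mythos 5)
Your proposal is correct and follows essentially the same route as the paper: substitute $m_1(P)^2+m_2(P)^2=(k_1+k_2-b_2)-(b_1-b_2)\gamma_1(P)^2$ into the characteristic polynomial \eqref{eq:character}, use \eqref{eqn:solutions_quartic} to collapse it to $\lambda^2\mp\sqrt{\Delta}$ on the $\Gamma_1^{\pm}$ branches (so $+$-points are saddles and $-$-points are centers), and read the branch occupancy of each region off Table \ref{table:2}, with $\Delta=0$ on $C_1$ forcing degeneracy. The only (minor) difference is that on $C_1$ the paper additionally exhibits the nilpotent linearization \eqref{eqn:Delta} with $m_2(P)\neq 0$, whereas you conclude non-hyperbolicity directly from the vanishing eigenvalues, which suffices for the stated claim.
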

\begin{proof} 
Using \eqref{eq:suslovsurface} and the fact that $\gamma_3(P) = 0$ at critical point $P$, we see that
\[m_1(P)^2 + m_2(P)^2 = k_1+k_2-b_2 -(b_1- b_2) \gamma_1(P)^2\]
Thus \eqref{eq:character} becomes
\[    \lambda^2 - 2 (b_1 - b_2) \left[\frac{k_1 + k_2 - 2 b_2}{2 (b_1 - b_2)} - \gamma_1(P)^2\right]   \]
When $\Delta \neq 0$, the critical points are non-degenerate. 
Let's call the critical points of the form $ \left( (- 1) ^{ i } \Gamma_1 ^{ + } , (- 1) ^{ j } \Gamma_2 ^{ + },   (- 1) ^{ k } \Lambda_1 ^{ + }, (- 1) ^{ l } \Lambda_2 ^{ + } \right) $ the \emph{$+$-critical points}, and  the critical points of the form  $ \left( (- 1) ^{ i } \Gamma_1 ^{ - } , (- 1) ^{ j } \Gamma_2 ^{ - },   (- 1) ^{ k } \Lambda_1 ^{ - }, (- 1) ^{ l } \Lambda_2 ^{ - } \right) $ the  \emph{$-$-critical points}. At the $\pm$-critical points, by \eqref{eqn:solutions_quartic}, \eqref{eq:character} further simplifies to
\[\lambda^2 \mp \sqrt{\Delta}, \text{ respectively }\]
In particular, all $+$-critical points are saddles and $-$-critical points are centers.
This gives the first three statements.

When $(k_1, k_2) \in C_1$, we have $ \Delta = 0 $ at the critical points and they are all degenerate. The linearization \eqref{eq:linearize} of $X$ at a critical point $P$ here becomes
\begin{equation}\label{eqn:Delta}
    -m_2(P)\gamma_3 \partial_{\gamma_2} \text{ with } m_2(P) \neq 0 \end{equation} 
which implies that they are nonhyperbolic. 
\end{proof}

\subsection{Periodic orbits} \label{subsect:periodorbits}
Recall that on each level surface $S_k$, an orbit of the Suslov flow projects to a portion of an orbit of a linear flow on the torus and the critical points of the Suslov flow correspond to precisely the points where $\partial U_k$ is tangent to the linear flow. Thus a generic orbit of the flow does not contain any critical point in its closure, and we say such generic orbits \emph{non-critical}.

 \begin{figure}[h!]
    \centering
    \begin{subfigure}[b]{0.45\textwidth}
        \includegraphics[width=\textwidth]{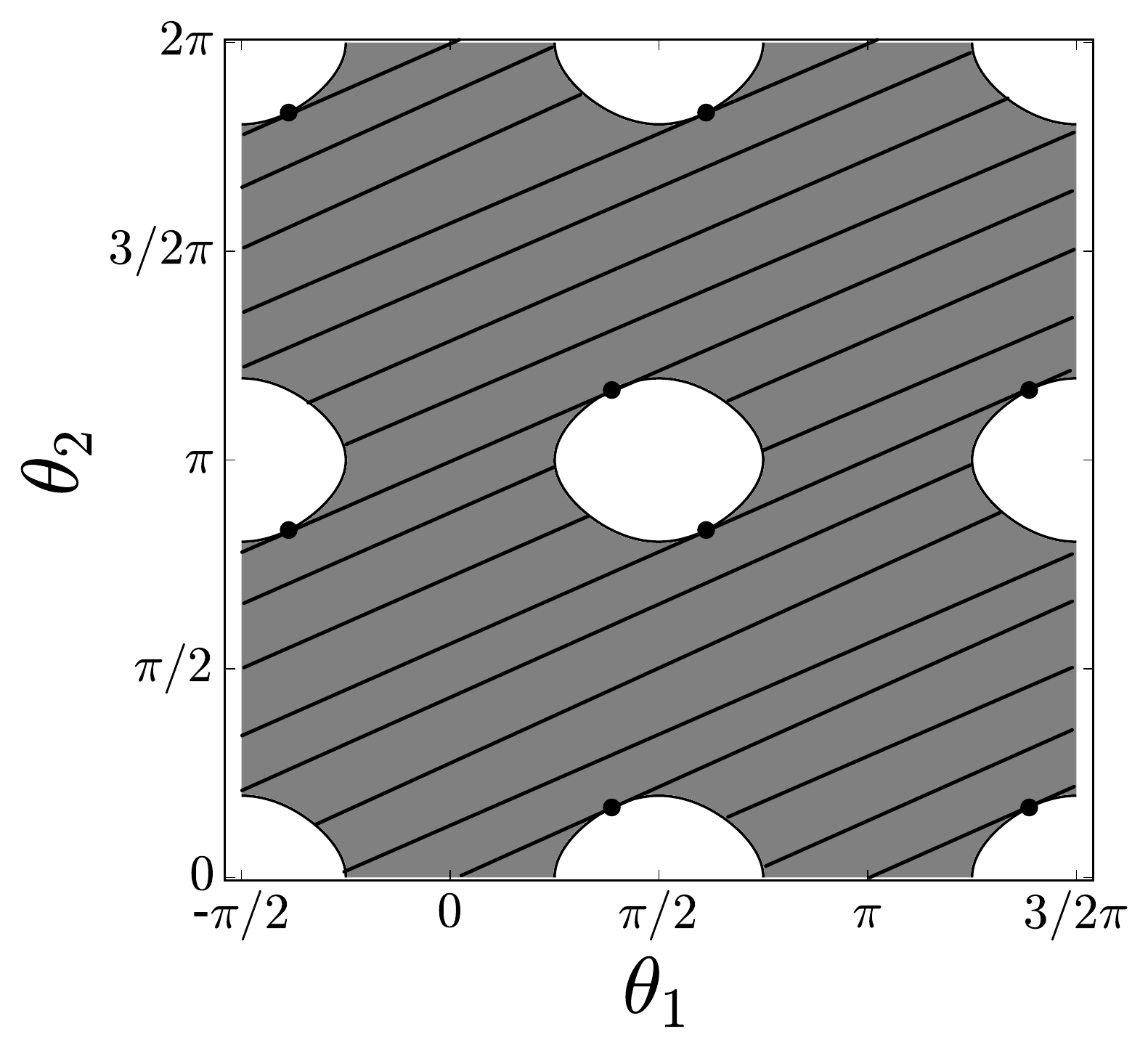}
       \caption{ $ b_1 = 4 $ , $ b_2 = 1 $, $ k_1 =2 $ , and $ k_2 =3/4 $ }
        \label{fig:Z2a}
    \end{subfigure}\qquad\qquad
    ~ 
    \begin{subfigure}[b]{0.45\textwidth}
        \includegraphics[width=\textwidth]{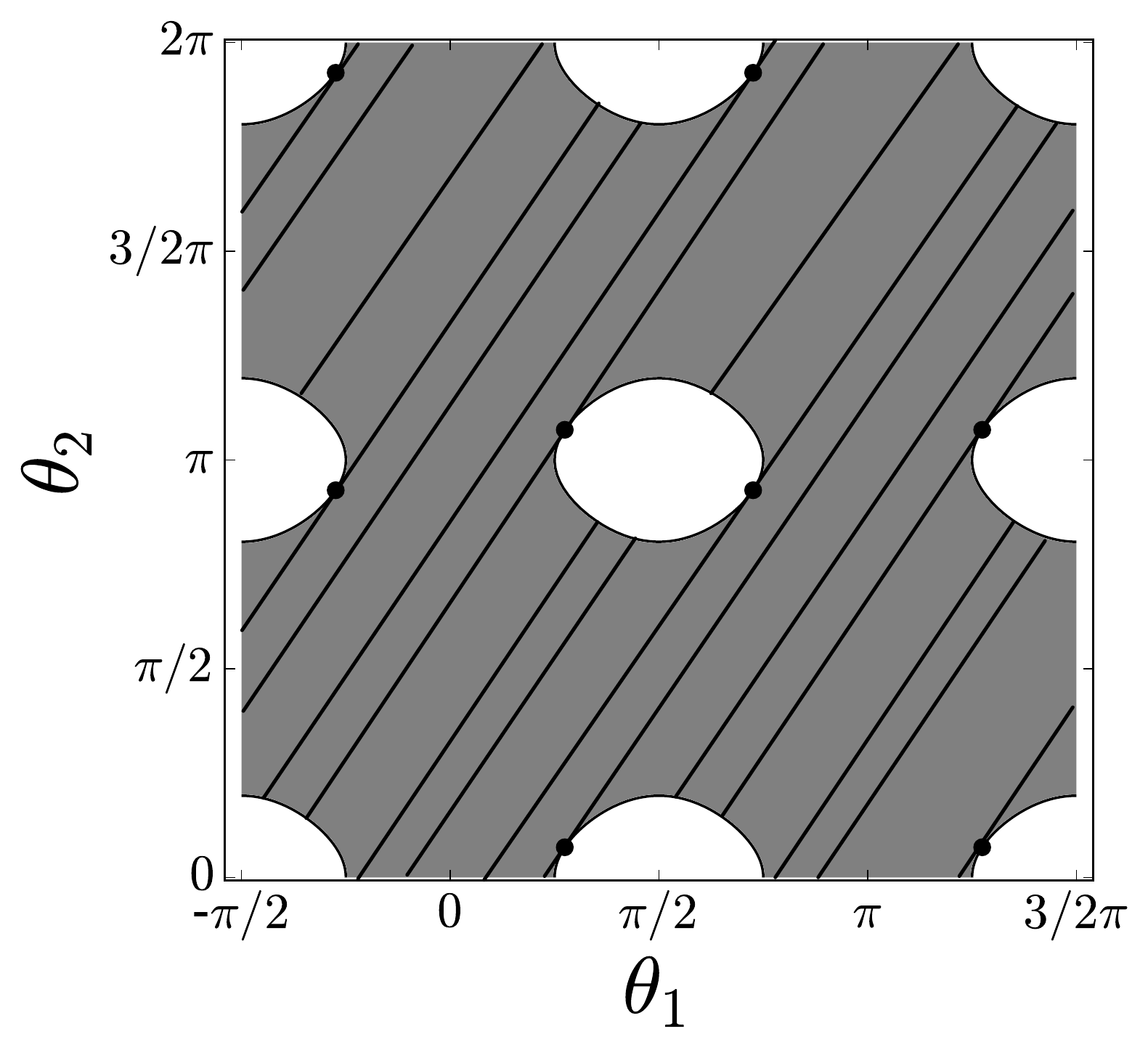}
        \caption{ $ b_1 = 4 $ , $ b_2 = 9 $, $ k_1 =2 $ , and $ k_2 =27/4 $   }
        \label{fig:Z2b}
    \end{subfigure}

    \caption{The flow for $ k \in  D_2$, for two different values of $ k $. 
    }\label{fig:Z2}
\end{figure}
 \begin{figure}[h!]
    \centering
    \begin{subfigure}[b]{0.45\textwidth}
        \includegraphics[width=\textwidth]{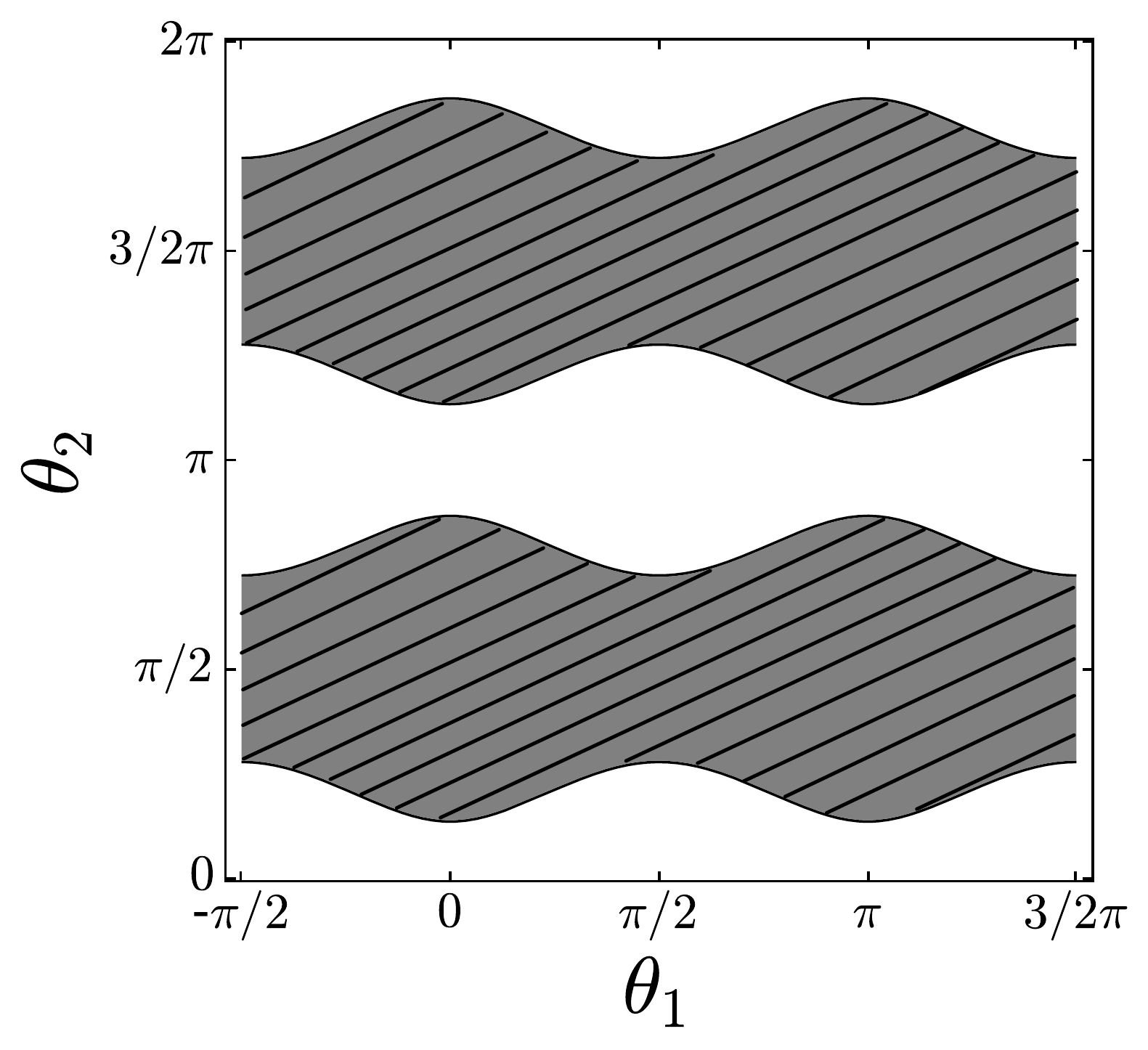}
        \caption{ $ b_1 = 4 $ , $ b_2 = 1 $, $ k_1 =2 $ , and $ k_2 =1.1 $ }
        \label{fig:Z4a}
    \end{subfigure}\qquad\qquad
    ~ 
    \begin{subfigure}[b]{0.45\textwidth}
        \includegraphics[width=\textwidth]{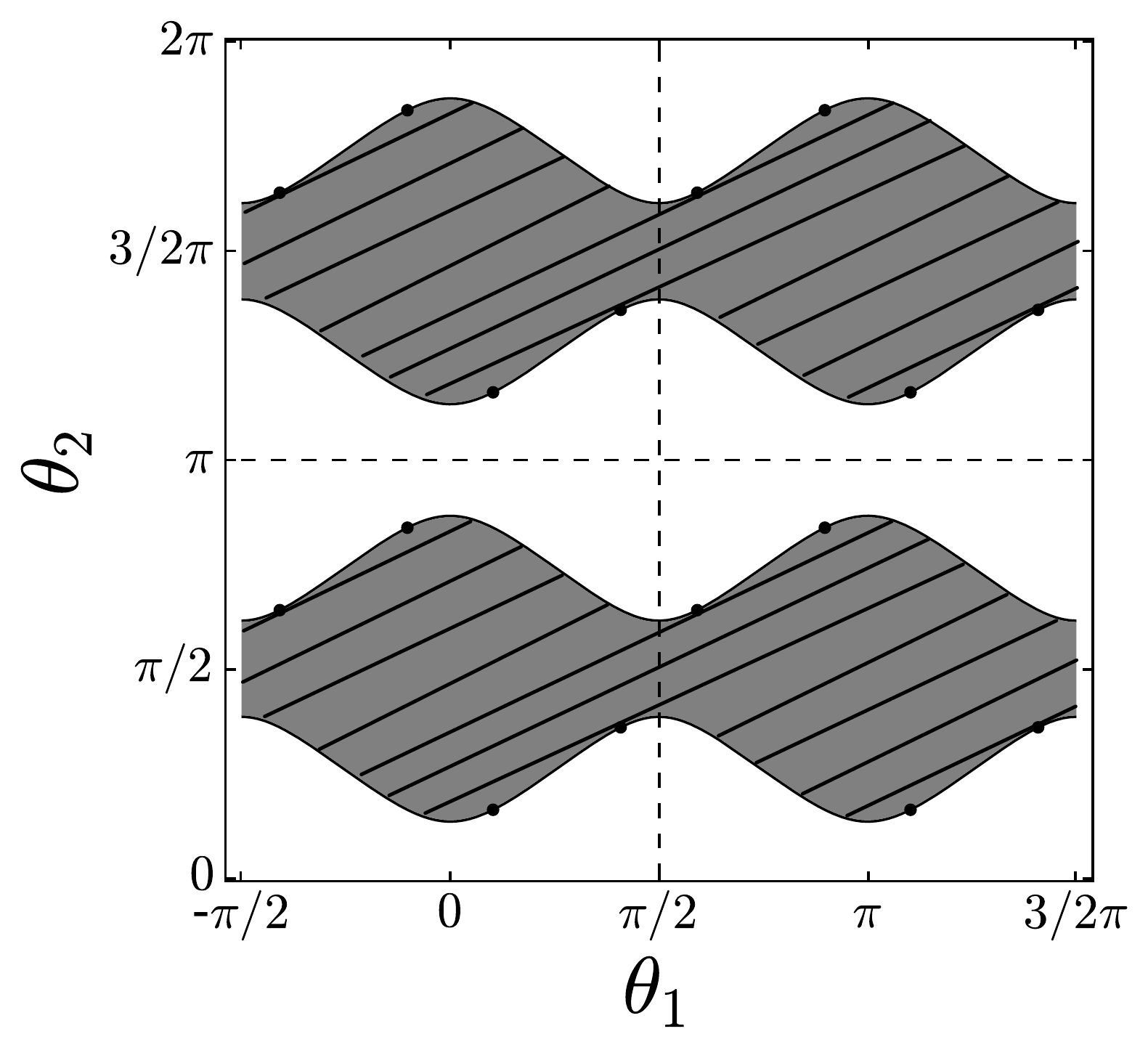}
        \caption{ $ b_1 = 4 $ , $ b_2 = 1 $, $ k_1 =3.4 $ , and $ k_2 =1.2 $   }
        \label{fig:Z4b}
    \end{subfigure}

    \caption{The flow for $ k \in  D_4 $, for two different values of $ k $. 
    }\label{fig:Z4}
\end{figure}

\noindent
Figures \ref{fig:Z2} and \ref{fig:Z4} illustrate the projection of the Suslov flow when $k \in D_2$ and $k \in D_4$, respectively. We notice that Figure \ref{fig:Z4a} corresponds to $k \in D_4^4$, where there is no critical point, while Figure \ref{fig:Z4b} corresponds to $k \in D_4^1 \union D_4^2$. One can understand the periodicity of the Suslov orbits from these projections.

\begin{lemma}\label{lemma:noncriticalperiodic}
 Let $\mc O$ be a non-critical orbit of the Suslov flow on the level set $S_k$ and $\mc O_T$ its projection to the torus. If $\mc O_T \inter \partial U_k$ contains at least $2$ points, then $\mc O$ is periodic.
\end{lemma}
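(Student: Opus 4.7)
The approach is to exploit a time-reversing involution. Define $\sigma : S_k \to S_k$ by $\sigma(m_1, m_2, \gamma_1, \gamma_2, \gamma_3) = (m_1, m_2, \gamma_1, \gamma_2, -\gamma_3)$. A direct inspection of \eqref{eq:suslovflow} shows that $\sigma$ conjugates the Suslov flow $\phi_t$ to its time reversal, $\sigma \circ \phi_t = \phi_{-t} \circ \sigma$, because every nonlinear term in the first four equations carries a single factor of $\gamma_3$, while the right-hand side of the $\dot\gamma_3$ equation is even in $\gamma_3$. The fixed-point set of $\sigma$ is $S_k \inter \{\gamma_3 = 0\}$, which by Lemma \ref{lemma:projectionimage} is mapped bijectively onto $\partial U_k$ by $\varphi_k \circ p_k$. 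Thus the hypothesis provides two distinct instants along the orbit at which $\gamma_3$ vanishes.

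Pick any base point $P_0 \in \mc O$ and let $T_1 \neq T_2$ be two times at which $\gamma_3(\phi_t(P_0)) = 0$. Write $B := \phi_{T_1}(P_0)$ and $A := \phi_{T_2}(P_0)$; these are distinct fixed points of $\sigma$ since they project to distinct points of $\partial U_k$. Applying $\sigma$ to the identity $B = \phi_{T_1}(P_0)$ and using $\sigma(B) = B$ with the conjugation relation yields $B = \phi_{-T_1}(\sigma(P_0))$, hence $\sigma(P_0) = \phi_{T_1}(B) = \phi_{2T_1}(P_0)$. The identical computation at $A$ gives $\sigma(P_0) = \phi_{2T_2}(P_0)$. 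Combining the two, $\phi_{2T_1}(P_0) = \phi_{2T_2}(P_0)$, so $\mc O$ is periodic with period dividing $2|T_2 - T_1| > 0$.

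The proof is short once the involution $\sigma$ is in hand, so the main obstacle is really psychological: recognizing that the two points in $\mc O_T \inter \partial U_k$ should be interpreted as two fixed points of a time-reversal symmetry of $\phi_t$, rather than, say, two bounces of a billiard. A minor technical check is that the orbit visits both $A$ and $B$ in finite time; but this is automatic because each point of $\mc O_T \inter \partial U_k$ has a unique preimage on $\mc O$ (by Lemma \ref{lemma:projectionimage}) and $\mc O$ is a parametrized orbit through those preimages. Geometrically, the argument says exactly that the projected orbit oscillates along a straight segment of slope $\sqrt{b_2/b_1}$ between $A$ and $B$ on the torus, with direction reversing at each encounter with $\partial U_k$, as illustrated in Figures \ref{fig:Z2} and \ref{fig:Z4}.
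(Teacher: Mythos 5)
Your proof is correct, but it takes a genuinely different route from the paper. The paper argues topologically via Lemma \ref{lemma:projectionimage}: the projected orbit $\mc O_T$ is an arc $\ell$ with $\partial \ell = \{p,q\} \subset \partial U_k$, the preimage $(\varphi_k\circ p_k)^{-1}(\ell)$ is a circle (two copies of the arc glued along its endpoints, since the projection is $2$-to-$1$ over $U_k$ and $1$-to-$1$ over $\partial U_k$), and the connected non-critical orbit must fill that circle. You instead observe that $\sigma(\gamma_3) = -\gamma_3$ is a time-reversing symmetry, $\sigma\circ\phi_t = \phi_{-t}\circ\sigma$ (the sign check on \eqref{eq:suslovflow} is right: the first four components flip sign under $\sigma$ while the fifth does not, which is exactly $D\sigma\, X = -X\circ\sigma$), that $\mathrm{Fix}(\sigma)\inter S_k = \{\gamma_3=0\}$ sits over $\partial U_k$ bijectively, and then invoke the classical reversible-systems fact that an orbit meeting the fixed-point set at two distinct times is periodic of period $2|T_2-T_1|$. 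Each approach has something to recommend it: the paper's identifies $\mc O$ with an explicit circle inside $S_k$ and fits the Morse-theoretic construction of Section \ref{sect:Suslovproblemdim3}, while yours produces an explicit period, makes the ``bounce off $\partial U_k$'' picture a theorem rather than a heuristic, and in fact does not use the non-criticality hypothesis at all --- your argument shows directly that \emph{any} orbit whose projection meets $\partial U_k$ twice is periodic, which also immediately yields the paper's subsequent remark that the closure of a Suslov orbit meets $\partial U_k$ in at most two points. It also sidesteps the one slightly terse step in the paper's proof (ruling out that the connected orbit is a proper open arc of the circle, which is where non-criticality is implicitly used).
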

\begin{proof} 
Let $p,q \in \mc O_T \inter \partial U_k$ and let $\ell$ be the component of $\mc O_T$ such that $\partial\ell = \{p,q\}$. By Lemma \ref{lemma:projectionimage}, we see that $(\varphi_k\circ p_k)^{-1}(\ell) \cong S^1$. Since $\mc O$ is connected, we see that $\mc O \cong S^1$, i.e. it is periodic.
\end{proof} 
From the proof, we also see that the torus projection of the closure of a Suslov orbit can have at most two intersection points with $\partial U_k$. The following proposition states that in a large open set of the configuration space, Suslov orbits are generically periodic.

\begin{prop}\label{prop:periodicorbitgalore}
 Let $\displaystyle{Q_1 = \union_{k \not \in \bar{D_1}} S_k}$. Then Suslov orbits in $Q_1$ are generically periodic. 
\end{prop}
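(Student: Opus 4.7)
The plan is to apply Lemma \ref{lemma:noncriticalperiodic}, which guarantees periodicity once a non-critical orbit's torus projection meets $\partial U_k$ in at least two points. For $k \notin \bar{D_1}$ we have $\varepsilon_k = k_1/b_1 + k_2/b_2 - 1 > 0$, so $\partial U_k$ is non-empty and there is indeed something to hit. The strategy is: first to establish a reflection principle along $\partial U_k$, and then to combine it with the structure of linear flows on $T^2$ to conclude that almost every non-critical orbit in $Q_1$ is periodic.

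For the reflection principle, I would observe that on a non-critical orbit, whenever $\gamma_3$ reaches $0$ one has $\dot\gamma_3 = m_2\gamma_2 - m_1\gamma_1 \neq 0$ (otherwise the point would be a critical point of $X$). Since $\dot\theta_i = \sqrt{b_i}\gamma_3$, both $\dot\theta_1$ and $\dot\theta_2$ vanish and then change sign simultaneously, so the projected trajectory on $T^2$ retraces its linear path backwards. Hence as soon as $\mc O_T$ crosses $\partial U_k$ transversally, it must reach $\partial U_k$ again at the other boundary point of the arc of the projection line that lies inside $\bar U_k$, yielding the two intersection points required by Lemma \ref{lemma:noncriticalperiodic}.

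I would then analyse the slope $\sqrt{b_2/b_1}$ in two cases. If the slope is irrational, every line of that slope is dense in $T^2$ and therefore crosses the non-empty set $\partial U_k$, so the reflection principle closes the orbit into a periodic one. If the slope is rational, the projection lines are closed loops in $T^2$; either they cross $\partial U_k$ (and the reflection principle applies) or they lie entirely in $U_k$, in which case $\gamma_3$ never vanishes on $\mc O$ and, by the $2$-to-$1$ description in Lemma \ref{lemma:projectionimage}, the orbit itself is a closed loop sitting inside one sheet of $S_k \to \bar U_k$, hence directly periodic.

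Finally, the only potentially non-periodic orbits are those whose projections are tangent to $\partial U_k$: such a tangency forces simultaneously $\gamma_3 = 0$ and $\dot\gamma_3 = 0$, identifying the tangency point with a critical point of $X$ from Tables \ref{table:1} and \ref{table:2}, so the orbit is a separatrix asymptotic to a saddle. These separatrices form a finite union of smooth curves on each smooth $S_k$, hence a closed, measure-zero, nowhere-dense subset of $Q_1$, which means the periodic orbits form an open dense subset of full measure. The main obstacle in executing this plan is the rational-slope sub-case where the projection lies entirely in $U_k$: one must argue carefully that such a line gives a genuinely periodic orbit rather than some subtler invariant curve, for which the explicit $2$-to-$1$ description of $S_k \to \bar U_k$ in Lemma \ref{lemma:projectionimage} is essential.
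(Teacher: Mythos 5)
Your proposal is correct and follows essentially the same route as the paper: both rest on Lemma \ref{lemma:noncriticalperiodic}, split on whether $\sqrt{b_2/b_1}$ is rational, use density of irrational-slope lines to guarantee two intersections with the non-empty $\partial U_k$ when $k \notin \bar{D_1}$, and dispose of the exceptional orbits as the finitely many separatrices through critical points. Your explicit reflection principle at $\partial U_k$ and your careful treatment of rational-slope loops contained entirely in $U_k$ are details the paper leaves implicit, but they do not change the argument.
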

\begin{proof} 
Suppose that $\displaystyle{\sqrt{\frac{b_2}{b_1}} \in \QQ}$, then the corresponding linear flow on $T^2$ are periodic. Let $\mc O$ be a Suslov orbit on an smooth level surface $S_k$, then it may not be periodic only if its torus projection $\mc O_T$ contains a critical point in its closure. There is a  finite number of those non-periodic orbits on each level surface, which implies that generic Suslov orbits are periodic. Note that in this case, we do not have to restrict to $Q_1$.
 
 Suppose that $\displaystyle{\sqrt{\frac{b_2}{b_1}} \not \in \QQ}$, then the corresponding linear flow on $T^2$ are not periodic and we restrict the consideration to $k \in Q_1$. For such $k$, $\partial U_k \neq \emptyset$, and $T^2 \setminus U_k$ is an open subset. Any orbit of the corresponding linear flow is dense in $T^2$, and intersects $\partial U_k$ infinitely many times. Since there are only finitely many critical points on each level surface $S_k$, there are only finitely many linear orbits on $T^2$ that intersect with the torus projection of the critical points. By Lemma \ref{lemma:noncriticalperiodic} all the Suslov orbits are periodic, except for a finite number  which connects critical points.
\end{proof} 

We remark that when there is no critical point on a level surface in $Q_1$, e.g. $k \in D_4^4$, all  Suslov orbits on such $S_k$ are periodic. Furthermore, when a Suslov orbit is not periodic, it can be either homoclinic or heteroclinic, e.g. Figure \ref{fig:Z2a} depicts $4$ heteroclinic orbits and $8$ homoclinic orbits, while in Figure \ref{fig:Z4b} there are $16$ homoclinic orbits.

\subsection{Topology of the level surfaces \texorpdfstring{$S_k$}{Sk} via the  Poincar\'e-Hopf theorem}
The Poincar\'e-Hopf theorem \cite{Milnor_Topology_1997} provides a deep link between a purely analytic concept, namely the index of a vector field, and a purely topological one, that is, the Euler characteristic. Recall that the Euler characteristic of a compact connected orientable two dimensional manifold is given by 
\[
    \mathcal{X} = 2 - 2g, 
\]
where $ g $ is the genus, that is the number of ``holes", and that such manifold  is determined, up to an homeomorphism, by its genus. 
The Poincar\'e-Hopf theorem allows us to determine the topology of $ M $ by counting the indices of the zeroes of a vector field on $ M $.

\begin{theorem}[Poincar\'e-Hopf]
 Let $M$ be a compact manifold and let $ v $ be a smooth vector field on $M$ with isolated zeroes. If $ M $ has a boundary, then $v$ is required to point outward at all boundary points.  Then, the sum of the indices at the zeroes of such vector fields is equal to the Euler characteristic of $M$, that is, we have 
\[ 
     \mathcal{X} (M) = \sum _i \operatorname{index } _{ x _i } (v).   
\]
\end{theorem}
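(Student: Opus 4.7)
The plan is to prove the Poincar\'e--Hopf theorem via the classical two-step approach from differential topology, reducing the theorem to (a) showing that the index sum $\sum_i \operatorname{index}_{x_i}(v)$ is independent of the choice of vector field $v$ satisfying the boundary and isolated-zero hypotheses, and then (b) computing this invariant explicitly for one convenient vector field to identify it with $\chi(M)$.

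For step (a), I would first make the local definition of the index precise: near an isolated zero $x_i$, pick a coordinate chart and a small ball $B_\eps(x_i)$ containing no other zero, and define the index as the degree of the normalized Gauss map $v/\|v\| : \partial B_\eps(x_i) \to S^{n-1}$. To compare two vector fields $v_0, v_1$ that both point outward on $\partial M$, I would consider the straight-line homotopy $v_t = (1-t)v_0 + t v_1$ on the cylinder $M \times [0,1]$. After a generic perturbation that keeps $v_0, v_1$ fixed at the ends and preserves the outward-pointing condition on each slice, the zero locus of the homotopy becomes a compact oriented $1$-manifold with boundary contained in $(M \times \{0\}) \sqcup (M \times \{1\})$; the outward-pointing condition prevents zeros from escaping through $\partial M \times [0,1]$. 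Pairing up the endpoints of this cobordism and matching signs with the local degree computation then yields $\sum \operatorname{index}(v_0) = \sum \operatorname{index}(v_1)$.

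For step (b), I would choose a Morse function $f : M \to \RR$ constant on $\partial M$ with $-\nabla f$ pointing outward there, and take $v = -\nabla f$. Near a critical point of Morse index $\lambda$, the linearization of $v$ has $\lambda$ negative and $n-\lambda$ positive eigenvalues, so its local index equals $(-1)^\lambda$. Summing over critical points gives $\sum_i \operatorname{index}_{x_i}(v) = \sum_\lambda (-1)^\lambda c_\lambda$, where $c_\lambda$ is the number of index-$\lambda$ critical points. The standard Morse-theoretic identity $\chi(M) = \sum_\lambda (-1)^\lambda c_\lambda$, obtained by building a CW-structure on $M$ from the critical points, then completes the proof.

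The hard part will be the transversality and sign bookkeeping in step (a). One must verify that the homotopy between $v_0$ and $v_1$ can be perturbed to be generic without introducing new zeros on $\partial M$, that the resulting zero locus is a smooth compact $1$-manifold with boundary lying entirely in the endpoint slices, and that the orientation conventions at those endpoints agree with the local degree used to define the index. These steps are standard in differential topology but technical, and essentially all of the analytic delicacy of the theorem is concentrated there; once invariance is established, step (b) is a direct computation.
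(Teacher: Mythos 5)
The paper does not prove this theorem: it is quoted as a classical result with a citation to Milnor and then applied to the closed surfaces $S_k$, so there is no in-paper argument to compare yours against, and your proposal must be judged on its own. Your two-step strategy --- (a) homotopy invariance of the index sum via a generic cobordism of zero sets, using that the outward-pointing condition is convex so the straight-line homotopy never vanishes over $\partial M \times [0,1]$, and (b) evaluation of the invariant on a gradient field via Morse theory --- is the standard route and is sound in outline.

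There is, however, a concrete sign and bookkeeping error in step (b) that makes the computation fail precisely in the boundary case covered by the statement. If $-\nabla f$ points outward along $\partial M$, then $f$ decreases in the outward direction, so $f$ attains its \emph{minimum} on $\partial M$; moreover the linearization of $-\nabla f$ at a critical point of Morse index $\lambda$ is $-\mathrm{Hess}(f)$, which has $\lambda$ \emph{positive} and $n-\lambda$ negative eigenvalues, so the local index is $(-1)^{n-\lambda}$, not $(-1)^{\lambda}$. In addition, with $f$ minimal on $\partial M$ the sublevel-set decomposition starts from a collar of the boundary, so Morse theory gives $\chi(M) = \chi(\partial M) + \sum_{\lambda}(-1)^{\lambda}c_{\lambda}$ rather than $\chi(M) = \sum_{\lambda}(-1)^{\lambda}c_{\lambda}$. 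The interval $M=[0,1]$ already exposes the problem: the only critical point of such an $f$ is an interior maximum, with $\lambda=1$, so your formula yields $-1$, while $-\nabla f$ is a source there of index $+1 = \chi([0,1])$. The fix is to take $v = +\nabla f$ for a Morse function attaining its maximum value exactly on $\partial M$ (so that $\nabla f$ points outward); then the local index at a Morse-index-$\lambda$ critical point is $(-1)^{\lambda}$, and $M$ is built up to a collar of $\partial M$ by attaching one $\lambda$-cell per critical point, giving $\chi(M)=\sum_{\lambda}(-1)^{\lambda}c_{\lambda}$ directly. For closed manifolds --- in particular for the closed surfaces $S_k$ to which the paper applies the theorem --- both discrepancies are invisible ($\partial M = \emptyset$ and, when $n$ is even, $(-1)^{n-\lambda}=(-1)^{\lambda}$), but the statement you set out to prove includes the boundary case, so the argument as written does not establish it.
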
 

We now use the Poincar\'e-Hopf theorem to give an alternative proof of Proposition \ref{prop:dim3components}.
Since on a compact two manifold the index of a sink, a source, or a center is $ +1 $, and the index of a hyperbolic saddle point is $ - 1 $, the classification of the critical points given in Proposition \ref{prop:crit_points} together with the knowledge of the number of connected components of the manifolds gives the proof for $ \Delta \neq 0 $. For instance, if $ k \in D _2 $, then there are 8 saddle points, so that $ \mathcal{X} (S _k) = - 8 $, and $ g = \frac{ 2-\mathcal{X}  } { 2 } =  5 $. If $ \Delta = 0 $, the critical points are all degenerate and the vector field near the critical points  is given by \eqref{eqn:Delta}. In this case it is easy to see that the index of any critical point is 0, and thus  $ \mathcal{X} (S _k) = 0 $. Since there are two connected components   $ g =  1 $ on each of them. It follows that $ S _k $ is isomorphic to two copies of $ T ^2 $.

In \cite{fernandez2014geometry} a similar approach was used to obtain the topology of Suslov's problem. The main difference is that the authors used an extension of the Poincar\'e-Hopf theorem that applies  to compact manifolds with boundary even when the vector field does not point outward at all boundary points. 

\section{Physical motion}\label{sec:physicalmotion}

 \subsection{Poisson sphere}\label{subsect:imageonPoissonsphere}
   The $2$-sphere $\gamma_1^2 + \gamma_2^2 + \gamma_3^2 = 1$ is known as the \emph{Poisson sphere}.
   Let $ \pi $ be the projection of $ S _k $ onto the Poisson sphere. The {\it  domain of possible motion} (DPM)   corresponding to $ k \in \mathbb{R}  ^2 $ is the set $P_k= \pi (S _k) \subset S ^2  $, that is, it is  the image of the projection of $S_k$ to the Poisson sphere \cite{fomenko2012visual}. 
   If $ p \in S ^2 $ is  a point on the Poisson sphere,  a vector $ v \in \mathbb{R}  ^2 $ such that  $ (p, v) \in S _k $ is said to be an  {\it admissible velocity} at the point $ p \in S ^2 $.  
   A classification  of the possible types of DPMs together with a study of  the set of admissible velocities gives a topological and geometrical description of the mechanical system and it is useful in describing the main features of  the physical motion for various values of $ k $.
   We rewrite the equations as
   \begin{enumerate}
    \item $\displaystyle{\gamma_1^2 = \frac{1}{b_1}(k_1 - m_1^2)}$, 
    \item $\displaystyle{\gamma_2^2 = \frac{1}{b_2}(k_2 - m_2^2)}$
   \end{enumerate}
   and $\gamma_3^2 = 1 - \gamma_1^2 - \gamma_2^2$. The cardinality of the preimage of a point $x = (\gamma_1, \gamma_2, \gamma_3) \in S_k$ is given by the number of pairs of $(m_1, m_2)$ that satisfy the equations $(1)$ and $(2)$ above. 
   Then $x \in P_k$ iff
   \[\gamma_1^2 \varleq \frac{k_1}{b_1} \text{ and } \gamma_2^2 \varleq \frac{k_2}{b_2}\]
   or equivalently we have
   \begin{equation}\label{eq:Poissonsphereimage}
    P_k = \{\gamma_1^2 + \gamma_2^2 + \gamma_3^2 = 1\} \inter \left\{(\gamma_1, \gamma_2, \gamma_3) \in \left[-\sqrt{\frac{k_1}{b_1}}, \sqrt{\frac{k_1}{b_1}}\right] \times \left[-\sqrt{\frac{k_2}{b_2}}, \sqrt{\frac{k_2}{b_2}}\right] \times \RR\right\}
   \end{equation}
   In the interior of $P_k$, we have $m_1 \neq 0$ and $m_2 \neq 0$, which implies that the projection $S_k \to P_k$ is $4$-to-$1$ in the interior.
   
   The region $P_k$ may have boundary components, over which one or both of $m_1$ and $m_2$ vanish. If exactly one of $m_1$ and $m_2$ vanishes, the projection is $2$-to-$1$. If $ m _1 = 0 $, then $ \dot \gamma _1 = 0 $, and  if $ m _2 = 0 $, then $ \gamma _2 = 0 $.  In the case $m_1 = m_2 = 0$, the corresponding points in $P_k$ are corners and the projection is $1$-to-$1$ and $ \dot \gamma _1 = \dot \gamma _2 = \dot \gamma _3 = 0 $. 
   The diagrams below illustrates the regions $P_k$ for various values of $k$.
    
    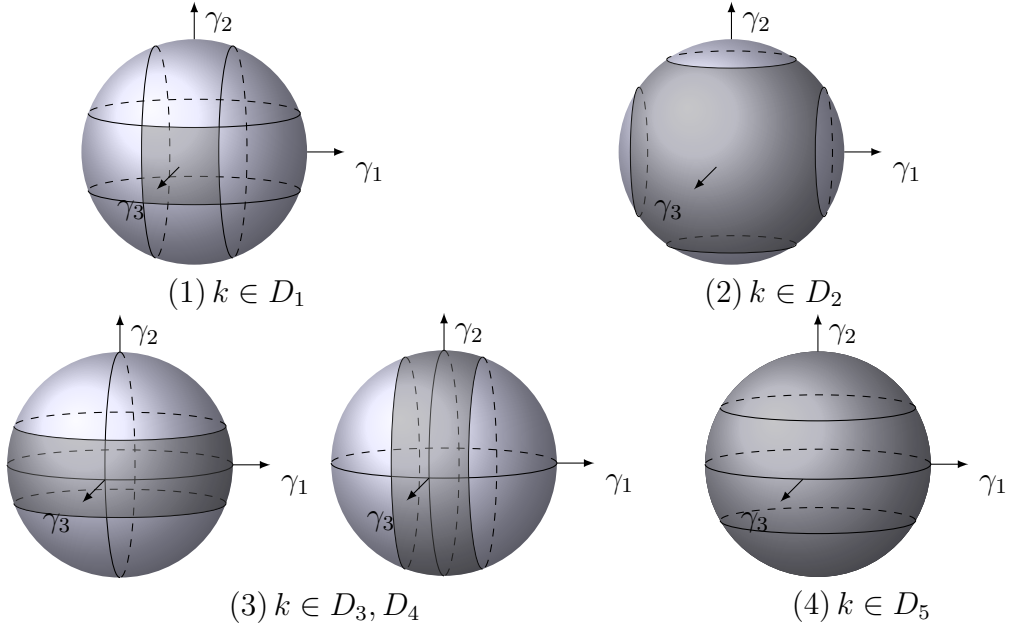
\begin{figure}[!ht]
    \begin{minipage}{0.45\textwidth}
     \begin{center}
      \begin{tikzpicture}[>=latex]
       \shade[ball color=blue!10!white,opacity=1] (0,0) circle (1.5cm);
       \longitude{-20};
       \longitude{20};
       \latitude{20};
       \latitude{-20};
       \begin{scope}
        \clip \regionh{20};
        \clip \regionv{20};
        \fill[ball color=black!10!white,opacity=0.3] \circlez;
       \end{scope}
       \axes
      \end{tikzpicture}
      
      $(1)\, k \in D_1$
     \end{center}
    \end{minipage}
    \begin{minipage}{0.45\textwidth}
     \begin{center}
      \begin{tikzpicture}[>=latex]
       \shade[ball color=blue!10!white,opacity=1] (0,0) circle (1.5cm);
       \longitude{-55};
       \longitude{55};
       \latitude{55};
       \latitude{-55};
       \begin{scope}
        \clip \regionh{55};
        \clip \regionv{55};
        \fill[ball color=black!10!white,opacity=0.3] \circlez;
       \end{scope}
       \axes
      \end{tikzpicture}
     
      $(2)\, k \in D_2$
     \end{center}
    \end{minipage}
    \begin{minipage}{0.6\textwidth}
     \begin{center}
      \begin{tikzpicture}[>=latex]
      \shade[ball color=blue!10!white,opacity=1] (0,0) circle (1.5cm);
      \longitude{0};
      \latitude{20};
      \latitude{0};
      \latitude{-20};
      \begin{scope}
       \clip \regionh{20};
       \fill[ball color=black!10!white,opacity=0.3] \circlez;
      \end{scope}
      \axes
     \end{tikzpicture}
     \begin{tikzpicture}[>=latex]
      \shade[ball color=blue!10!white,opacity=1] (0,0) circle (1.5cm);
      \longitude{0};
      \longitude{20};
      \latitude{0};
      \longitude{-20};
      \begin{scope}  
       \clip \regionv{20};
       \fill[ball color=black!10!white,opacity=0.3] \circlez;
      \end{scope}
      \axes
      \end{tikzpicture}
      
      $(3)\, k \in D_3, D_4$
     \end{center}
    \end{minipage}
    \begin{minipage}{0.3\textwidth}
     \begin{center}
      \begin{tikzpicture}[>=latex]
       \fill[ball color=blue!10,opacity=1] (0,0) circle (1.5 cm);
       \fill[ball color=black!10!white,opacity=0.3]  (0,0) circle (1.5 cm);
       \latitude{30};
       \latitude{0};
       \latitude{-30};
       \axes
      \end{tikzpicture}
     
      $(4)\, k \in D_5$
     \end{center}
    \end{minipage}
\caption{\label{fig:Poissonimage} The domain of possible motion $ P _k $ for various values of $k$. (1) For $ k \in D _1 $, $ P _k $  has the form of two squares on opposite sides of the Poisson sphere. (2) For $ k \in D _2 $, $ P _k $ is a sphere with four holes. (3) For $ k \in D _3, D _4 $, $ P _k $ is a  horizontal or vertical  band  wrapping around the sphere.  (4) For $ k \in D _5 $, $ P _k $ is the whole sphere.}
\end{figure}

Clear pictures emerge when the observations so far are combined. By \eqref{eq:Poissonsphereimage}, the image $P_k$ of $S_k$ on the Poisson sphere is bounded by
 \[\gamma_1 = \pm \sqrt{\frac{k_1}{b_1}} \text{ and } \gamma_2 = \pm \sqrt{\frac{k_2}{b_2}}\]
 which correspond precisely to the following lines on the flat torus $T^2$, as indicated by the dashed lines in Figure \ref{fig:U}:
 \[\theta_1 = \pm \frac{\pi}{2} \text{ and } \theta_2 = 0 \text{ or } \pi\]
 The dashed lines divide $T^2$ into four components, and the projection $ \pi :S_k \to P_k$ restricted to each component is $1$-to-$1$; and the image of shaded region contained in each of the components coincide. 
 The following proposition provides a detailed classification of the DPM for various values of $ k \in \mathbb{R}  ^2 $.
 \begin{prop}\label{prop:projectioncovering}
  Over the interior of the domain of possible motion $P_k$, the projection $\pi :S_k \to P_k$ is $4$-to-$1$. On the boundary components of $P_k$, the projection is $2$-to-$1$, except for over the corners when $k \in D_1$, where it is $1$-to-$1$. Moreover, we have
  \begin{enumerate}
   \item For $k \in D_1$, each torus in $S_k$ is projected onto a component of $P_k$. Each component of $ P _k $ is a square, see figure   \ref{fig:Poissonimage}(1).
   \item For $ k \in D _2 $ the set $ P _k $ is a sphere with four holes as depicted in figure \ref{fig:Poissonimage}(2).
   \item For $k \in D_3$ or $D_4$, the projection $ \pi $  restricted to each torus component of $S_k$  is $2$-to-$1$ in the interior of $P_k$. $ P _k $ is a band wrapping around the Poisson sphere, see figure \ref{fig:Poissonimage}(3).

   \item For $k \in D_5$, the projection $ \pi$ is an isomorphism when restricted to each $S^2$-component of $S_k$, see figure \ref{fig:Poissonimage}(4).
  \end{enumerate}
 \end{prop}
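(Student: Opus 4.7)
My plan is to prove the proposition by first establishing the local multiplicity count of $\pi$ at any point of $P_k$, then case-by-case determining the shape of $P_k$ and matching its components with those of $S_k$.

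First I would solve the defining equations of $S_k$ for $m_1, m_2$ in terms of $(\gamma_1, \gamma_2, \gamma_3)$. From $m_i^2 = k_i - b_i \gamma_i^2$, the preimage $\pi^{-1}(\gamma_1, \gamma_2, \gamma_3)$ is nonempty iff $b_i \gamma_i^2 \leqslant k_i$ for $i = 1, 2$, recovering the description \eqref{eq:Poissonsphereimage}. Counting sign choices for $m_1, m_2$ immediately gives the multiplicities: $4$ preimages when both $b_i \gamma_i^2 < k_i$ (interior), $2$ preimages when exactly one equality holds (edge), and $1$ preimage when both equalities hold (corner). Corners exist only when there is a point of $S^2$ with $\gamma_1 = \pm\sqrt{k_1/b_1}$ and $\gamma_2 = \pm\sqrt{k_2/b_2}$, which requires $k_1/b_1 + k_2/b_2 \leqslant 1$; among the five regions this holds strictly only in $D_1$.

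Next I would describe $P_k$ in each region by intersecting the box $[-\sqrt{k_1/b_1}, \sqrt{k_1/b_1}] \times [-\sqrt{k_2/b_2}, \sqrt{k_2/b_2}] \times \RR$ with the unit sphere. In $D_1$, both side-lengths are less than $1$ and the four corners of the box lie strictly inside the unit disk in the $(\gamma_1, \gamma_2)$-plane, so the resulting set splits into two disjoint topological disks (the ``squares'') on the two hemispheres. In $D_2$, the side-lengths are still less than $1$ but the corners now sit outside the unit disk; equivalently, the complement on $S^2$ consists of the four disjoint spherical caps $\{|\gamma_1| > \sqrt{k_1/b_1}\} \cup \{|\gamma_2| > \sqrt{k_2/b_2}\}$, which are pairwise disjoint because $\gamma_1^2 + \gamma_2^2 > k_1/b_1 + k_2/b_2 > 1$ is impossible. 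In $D_3$ (respectively $D_4$), one side of the box exceeds $1$ so only the $\gamma_2$ (resp.\ $\gamma_1$) bound is active and $P_k$ is a band wrapping around $S^2$. In $D_5$ the whole sphere satisfies both bounds, so $P_k = S^2$.

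Finally I would match the components of $S_k$ from Proposition \ref{prop:dim3components} with those of $P_k$. For $k \in D_1$, $\bar U_k = T^2$ has no boundary so $S_k$ splits according to the sign of $\gamma_3$; each torus therefore projects into a single hemisphere and hence onto a single square component of $P_k$, with the $4$ sign choices of $(m_1, m_2)$ accounting for the $4$-to-$1$ cover of the interior. For $D_5$, each $S^2$-component of $S_k$ corresponds to a fixed sign combination of $(m_1, m_2)$ (since neither $m_i$ vanishes on $S_k$), so $\pi$ is a continuous bijection $S^2 \to S^2$ and hence a homeomorphism. For $D_3$ and $D_4$, one of $m_1, m_2$ is non-vanishing on $S_k$ while the other changes sign; the two tori are distinguished by the sign of the non-vanishing coordinate, and on each torus the remaining sign choice yields a $2$-to-$1$ projection onto the band. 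The case $D_2$ is the most delicate: the connected genus-$5$ surface must project $4$-to-$1$ onto the $4$-holed sphere, and the main obstacle is to verify this carefully by tracking how the two sheets $\{\gamma_3 > 0\}, \{\gamma_3 < 0\}$ of the double cover $S_k \to \bar U_k$ fit together across the boundary circles $\partial U_k$ (where $\gamma_3 = 0$), which I would handle using the attaching picture from Lemma \ref{lemma:projectionimage}.
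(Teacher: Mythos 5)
Your proposal is correct and follows essentially the same route as the paper: the multiplicity count comes from the sign choices of $(m_1,m_2)$ with $m_i^2=k_i-b_i\gamma_i^2$, the set $P_k$ is identified as the intersection of the sphere with the box as in \eqref{eq:Poissonsphereimage}, and the case analysis over $D_1,\dots,D_5$ matches the paper's discussion preceding the proposition. The only cosmetic difference is that you match components of $S_k$ to components of $P_k$ via the signs of $\gamma_3$, $m_1$, $m_2$, whereas the paper phrases the same matching through the four regions of the flat torus cut out by the dashed lines in Figure \ref{fig:U}; both are complete.
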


We can now use Proposition \ref{prop:projectioncovering} to understand the physical motion of the rigid body.

If $ k \in D _1 $, the trajectories in each component of $ P _k $, are similar to Lissajous figures (the sum of independent horizontal and vertical oscillations). For each point inside each square there are four admissible velocities, there are two on its sides and one on the vertices.  If $ \sqrt{ \frac{ b _2 } { b _1 } } \not\in \mathbb{Q}  $, then the trajectories is dense in the squares, otherwise they are periodic.  In either case $ \mathbf{E} _3 $ wobbles around the  vertical direction, while  $ \mathbf{E} _1 $  remains close to  horizontal and the wheels remain close to being vertical (see figure \ref{fig:Suslov}). 

If $ k \in D _2 $, almost all the trajectories are periodic except for a finite number of orbits which connect critical points. For points in the interior of $ P _k $ there are four admissible velocities.  There are two admissible velocities on the boundary of $ P _k $.  This means that there are two trajectories  for each point in the interior of $ P _k $ and  each trajectory can be followed in either direction. The physical motion in this case can be distinguished from the previous case since $ \mathbf{E} _3 $ can go from pointing upward to pointing downward.  

If $ k \in D _3 $, the trajectories are confined in a band wrapping around the sphere and alternatively touch the upper and the lower boundary of the band. Since this region is the image of two tori there are two admissible velocities at each point, and a point can move along the trajectories in either direction. In this case $ \mathbf{E} _3 $  performs a complete revolution wobbling about the vertical plane spanned by $ \mathbf{e} _1 $ and $ \mathbf{e} _3 $. The wheels remain close to vertical (see Figure \ref{fig:Suslov}). The case $ k \in D _4 $ is similar. When $b_1 > b_2$, certain subregions of $D_4$ allow homoclinic or heteroclinic orbits. From Figure \ref{fig:Z4b}, we see that in this case, the behaviour of periodic orbits changes drastically on either side of a homoclinic or heteroclinic orbit.

If $ k \in D _5 $, the trajectories are homeomorphic to circles. In this case there are four possible velocities for each point on $ P _k $.  It follows that  there are two trajectories  for each point on the Poisson sphere and  each trajectory can be followed in either direction.  

 

\section*{Acknowledgements}
The authors wish to express their appreciation for helpful discussions with  Luis Garcia-Naranjo  and Dmitry Zenkov.
The research was supported in part by Natural Sciences and Engineering Research Council of Canada (NSERC) Discovery Grants (SH, MS). 
\bibliographystyle{amsplain}
\bibliography{References}
 \end{document}